\documentclass[3p, sort&compress]{elsarticle}
\usepackage{amsmath}
\usepackage{amsthm}
\usepackage{amssymb}
\usepackage{titlesec}
\usepackage[thinlines,thiklines]{easybmat}
\newcommand{\cent}{{|}\!\!\mathrm{c}}
%\usepackage{graphicx}
%\usepackage{epsfig}

% Running line numbers:
%\usepackage{lineno,xcolor}
%\linenumbers
%\setlength\linenumbersep{5pt}
%\renewcommand\linenumberfont{\normalfont\tiny\sffamily\color{gray}}
% Running line numbers:

\theoremstyle{definition}\newtheorem{definition}{Definition}
\theoremstyle{plain}\newtheorem{theorem}{Theorem}
\theoremstyle{definition}\newtheorem{remark}{Remark}
\theoremstyle{definition}\newtheorem{example}{Example}
\theoremstyle{plain}
\theoremstyle{plain}
\theoremstyle{plain}\newtheorem{lemma}{Lemma}
\theoremstyle{plain}

\begin{document}
\begin{frontmatter}
\title{Promise problems solved by quantum and classical finite automata}

\author{Shenggen Zheng$^{1}$}
\author{Lvzhou Li$^{1}$}
\author{Daowen Qiu$^{1,}$\corref{one}}
\author{Jozef Gruska$^{2}$}

 \cortext[one]{Corresponding author (D. Qiu). {\it E-mail addresses:} issqdw@mail.sysu.edu.cn (D. Qiu), zhengshenggen@gmail.com (S. Zheng), lilvzh@mail.sysu.edu.cn(L. Li), gruska@fi.muni.cz (J. Gruska).}

\address{
   $^1$Department of Computer Science, Sun Yat-sen University, Guangzhou 510006,  China\\
   $^{2}$Faculty of Informatics, Masaryk University, Brno 60200, Czech Republic\\
}

\begin{abstract}

 The concept of {\em promise problems} was introduced and started to be systematically explored by Even, Selman, Yacobi, Goldreich, and other scholars. It has been argued that promise
problems should be seen as partial {\em decision problems} and as such that they are more
fundamental than decision problems and formal languages that
used to be considered as the basic ones for complexity theory.
The main purpose of this paper is to explore the promise problems accepted by classical, quantum and also
semi-quantum finite automata. More specifically, we first introduce two acceptance modes of promise problems, {\em recognizability} and {\em solvability},  and explore their basic properties. Afterwards, we show several results concerning descriptional complexity on promise problems. In particular, we  prove: (1) there is a promise problem that can be recognized exactly by {\em  measure-once one-way quantum finite automata} (MO-1QFA), but no {\em deterministic finite automata} (DFA) can recognize it;  (2) there is a promise problem that can be solved with error probability $\epsilon\leq 1/3$ by {\em one-way finite automaton with quantum and classical states} (1QCFA), but no {\em one-way probability finite automaton} (PFA) can solve it with error probability  $\epsilon\leq 1/3$;
and especially, (3) there are promise problems $A(p)$ with prime $p$ that can be solved {\em with any error probability} by MO-1QFA with only two quantum basis states, but they can not be solved {\em exactly} by any MO-1QFA with two quantum basis states; in contrast, the minimal PFA solving $A(p)$ {\it with any error probability} (usually smaller than $1/2$) has $p$ states. Finally, we mention a number of problems related to promise for further study.

\end{abstract}

\begin{keyword}
%% keywords here, in the form: keyword \sep keyword
Promise problems \sep Quantum computing \sep Finite automata\sep  Quantum finite automata\sep  Recognizability \sep Solvability

\end{keyword}

\end{frontmatter}

\section{Introduction}

Informally, a promise problem is the problem to decide whether an object or process has a property $P_1$ or $P_2$, provided it is promised (known) to have a property $P_3$.

The concept of a promise problem was introduced explicitly in \cite{ESY84} and it has been argued there that promise problems are actually more fundamental for the study of computational theory issues than decision problems or, more formally, formal language versions/encodings of the decision problems.

Such a view on the fundamental importance of promise problems has been even more emphasized in the survey paper \cite{Gh06}, where also the following basic version of the promise problems has been introduced.

\begin{definition}
A promise problem over an alphabet $\Sigma$ is a  pair $(A_{yes}, A_{no})$ of disjoint subsets of $\Sigma^*$. The union $A_{yes}\cup A_{no}$ is then called the {\em promise} and $A_{yes}$ as well $A_{no}$ are called promise's components.
\end{definition}

The goal is then   to decide  whether $x\in A_{yes}$ or  $x\in A_{no}$ for  a given string $x$ from the promise set.
In a special (trivial) case the promise is the the whole set $\Sigma^*$. However, in general it may be very nontrivial to decided whether an input string is  in a given promise set.

In spite of the fact that both papers  \cite{ESY84,Gh06} have brought interesting problems and outcomes,   the study of promise problems did not get a proper momentum yet.

On the other side, the results concerning several promise problems in quantum information processing have had very large impact. They demonstrated that using quantum phenomena and processes one can solve several interesting promise problems with much less quantum queries (to quantum black boxes) than in the case only classical tools and queries (to classical black boxes) are available. The initial development in this area was culminated by the result of Simon \cite{Sim97} that the
promise problem he introduced can be solved with the polynomial number of quantum and classical queries but not with polynomial number of classical queries only even if probabilistic tools are used. The second promise problem is the Hidden Subgroup Problem for non-commutative groups, which took very large attention, especially its special cases, for example  integer factorization, due to Shor \cite{Shor97}, and can be now seen as one of the most fundamental, and still open, problems.

Almost all papers so far, especially papers  \cite{ESY84,Gh06},  dealt with promise problems in the context of such high level complexity classes as {\bf P}, {\bf NP}, {\bf BPP}, {\bf SZK}  and so on.

In this paper we start to explore promise problems on another level, namely, using classical and quantum or even semiquantum finite automata to attack some promise problems working in various (especially two special) modes. The remainder of the paper is organized as follows.  In Section 2, we recall the definitions of classical and quantum finite automata that will be used in the paper, and define two acceptance modes of promise problems, {\em recognizability} and {\em solvability} of promise problems by automata. Then, in Section 3,  we deal with the closure and ordering properties of promise problems. Afterwards, in Section 4, lower and upper bounds are derived concerning the state complexity in a promise problem between the promise and its two components.

In particular, we study some promise problems in terms of classical and quantum finite automata in Section 5, and obtain the following results: that there is a promise problem that can be recognized exactly by {\em  measure-once one-way quantum finite automata} (MO-1QFA), but no {\em deterministic finite automata} (DFA) can recognize it (Theorem \ref{ThMOQ-DFA});  there is a promise problem that  can be solved with any error probability  by {\em one-way finite automaton with quantum and classical states} (1QCFA), but no {\em one-way probability finite automaton} (PFA) can solve it with error probability   $\epsilon\leq 1/3$ (Theorem \ref{ThQCFA-PFA}).

Especially, in Section 5 we prove a hierarchic result concerning QFA. More exactly, we show that  there are promise problems $A(p)$ with size $p$ that can be solved {\em with any error probability} by MO-1QFA with only two quantum basis states, but they can not be solved {\em exactly} by any MO-1QFA with two quantum basis states (Theorem \ref{ThMOQ-MOQ}), and in contrast, the minimal PFA solving $A(p)$ with any error probability (usually smaller than $1/2$) has $p$ states (Theorem \ref{ThMOQ-PFA}). However, we do not know whether there is an MO-1QFA with more than {\it two} quantum basis states being able to solve exactly this promise problems $A(p)$.

In addition, the above result may give rise to a hierarchic problem for the classes solved by MO-1QFA in terms of different quantum basis states. More precisely, let ${\cal C}(P)_{n}$  denote the class of promise problems solved exactly by an MO-1QFA with $n$ quantum basis states. Then, whether does ${\cal C}(P)_{m}\subset {\cal C}(P)_n$ hold for $m\leq n$?  Therefore, in Section 6 we mention a number of problems related for further study.

\section{Preliminaries}
We introduce in this section  some basic concepts    and  notations concerning classical and quantum finite automata. For more on  quantum information processing and (quantum and semi-quantum) finite automata we refer the reader to \cite{Qiu02,Qiu04,Qiu09,Qiu11, Qiu12BC,Qiu12,Qiu15,Gru99,Hop,Nie00}.

\subsection{Deterministic finite automata}
In this subsection we recall the definition of {\em  deterministic finite automata}  (DFA) and give the definition of so-called {\em promise version     deterministic finite automata}   (pvDFA).
\begin{definition}\label{dfa}
A deterministic finite automaton (DFA) ${\cal A}$ is specified by a 5-tuple

\begin{equation}
\mathcal{A}=(S,\Sigma,\delta,s_0, S_a),
\end{equation}
where:
\begin{itemize}
\item $S$ is a finite set of classical states;
\item $\Sigma$ is a finite set of input symbols;
\item $s_{0}\in S$ is the initial state of the automaton;
\item $S_a\subseteq S$ is a set of accepting states;
\item $\delta$ is a transition function:
\begin{equation}
\delta:S\times\Sigma \rightarrow S.
\end{equation}
\end{itemize}
\end{definition}

For any $w\in\Sigma^*$ and $\sigma\in \Sigma$, we define
\begin{equation}
    \widehat{\delta}(s,w\sigma)=\widehat{\delta}(  \widehat{\delta}(s,w), \sigma)
\end{equation}
and if $w$ is the empty string, then
 \begin{equation}
    \widehat{\delta}(s,w\sigma)=\delta(s,\sigma).
\end{equation}

To every DFA ${\cal A}=(S,\Sigma,\delta,s_0, S_a)$ we assign a language ${\cal L}({\cal A})$ defined as following
\begin{equation}
{\cal L}({\cal A})=\{w\mid  \widehat{\delta}(s_0,w)\in S_a, w\in\Sigma^*\}.
\end{equation}

\begin{definition}
A language $L$ over an alphabet $\Sigma$ is {\em recognized} by a DFA ${\cal A}$ if for every $w\in\Sigma^*$
\begin{itemize}
  \item $ w\in L$ if and only if $\widehat{\delta}(s_0,w)\in S_a$.
  \item  $ w\not\in L$ if and only if  $\widehat{\delta}(s_0,w)\not\in S_a$.
\end{itemize}
\end{definition}

It is well known that a language $L$ is  recognized by a DFA if and only if $L$ is  regular.
To every DFA ${\cal A}$ we assign also the (maximal) promise problem ${\cal P}({\cal A})$  defined as follows
\begin{equation}
{\cal P}({\cal A})=\left({\cal P}_{yes}({\cal A})=\{ w\mid  \widehat{\delta}(s_0,w)\in S_a, w\in\Sigma^*\}, {\cal P}_{no}({\cal A})=\Sigma^*\setminus {\cal P}_{yes}({\cal A})\right).
\end{equation}

\begin{definition}
A promise problem $A = (A_{yes}, A_{no})$ is {\em solved} by a DFA ${\cal A}$ if  for every $w\in A_{yes}\cup A_{no}\subseteq\Sigma^*$
\begin{itemize}
  \item $w\in A_{yes}$  implies that $\widehat{\delta}(s_0,w)\in S_a$.
  \item  $ w\in A_{no}$ implies that $\widehat{\delta}(s_0,w)\not\in S_a$.
\end{itemize}
\end{definition}

\begin{definition}
A {\em promise version     deterministic finite automaton}   (pvDFA) ${\cal A}$ is specified by a 6-tuple
\begin{equation}\label{df1}
{\cal A}=(S,\Sigma,\delta,s_0, S_a,S_r),
\end{equation}
where $S_a$ is a set of accepting states  and  $S_r$ is a set of  rejecting states, respectively, and $S$, $\Sigma$, $\delta$, $s_0$ are defined as in  Definition \ref{dfa}.
\end{definition}

A DFA can be see as a special pvDFA with $S_a\cup S_r=S$.   If a pvDFA ${\cal A}$ is such that $S_a\cup S_r=S$, then it is equivalent to a DFA.  In such a case, we say that  ${\cal A}$ is a DFA.
To every pvDFA we assign a promise problem ${\cal P}({\cal A})$  defined as following
\begin{equation}
{\cal P}({\cal A})=\left({\cal P}_{yes}({\cal A})=\{ w\mid  \widehat{\delta}(s_0,w)\in S_a, w\in\Sigma^*\}, {\cal P}_{no}({\cal A})=\{ w\mid  \widehat{\delta}(s_0,w)\in S_r, w\in\Sigma^*\}\right).
\end{equation}

\begin{definition}
A promise problem $A = (A_{yes}, A_{no})$ is {\em recognized} by a pvDFA ${\cal A}$ if for every $w\in\Sigma^*$
\begin{itemize}
  \item  $ w\in A_{yes}$ if and only if $\widehat{\delta}(s_0,w)\in S_a$.
  \item  $ w\in A_{no}$ if and only if $\widehat{\delta}(s_0,w)\in S_r$.
\end{itemize}
\end{definition}
\begin{definition}
A promise problem $A = (A_{yes}, A_{no})$ is {\em solved} by a pvDFA ${\cal A}$ if for every  $w\in A_{yes}\cup A_{no}$
\begin{itemize}
  \item   $w\in A_{yes}$ implies that $\widehat{\delta}(s_0,w)\in S_a$.
  \item  $w\in A_{no}$ implies that $\widehat{\delta}(s_0,w)\in S_r$.
\end{itemize}
\end{definition}

If a language $L$ is recognized by a DFA , then we can find efficiently the minimal DFA ${\cal A}$ such that ${\cal L}({\cal A})=L$. If a promise problem $A$ is recognized by a pvDFA, then we can also find efficiently the minimal pvDFA ${\cal A}$  such that ${\cal P}({\cal A})=A$. More about that will be  in  Section \ref{s-propeties}.

We will see that for pvDFA  recognizability and solvability modes can be seen as much different.

\subsection{Quantum and semi-quantum finite automata basic models and working modes}
Quantum finite automata were  introduced by  Kondacs and Watrous \cite{Kon97} and also by Moore and Crutchfields \cite{Moo97}.
It has been proved that one-way quantum finite automata (1QFA)  with unitary operations and projective measurements are less powerful than one-way  classical finite automata (1FA) \cite{Amb98,LiQiu09}. However, 1QFA can be more succinct in recognizing languages  or solving promise problems
\cite{Amb98,Amb09,AmYa11,BMP03,BMP06,BMP14,Ber05,Fre09,GQZ14b,Yak10,ZhgQiu112,Zhg13,Zhg14}.

\begin{definition}A measure-once quantum finite automaton (MO-1QFA) ${\cal M}$ is specified by a 5-tuple
 \begin{equation}
{\cal M}=(Q,\Sigma,\{U_{\sigma}\,|\, \sigma\in\Sigma' \},|{0}\rangle,Q_a)
\end{equation}
where:

\begin{itemize}
\item  $Q$ is a finite set of orthonormal quantum (basis) states, denoted as $\{|i\rangle\mid 0\leq i< |Q|\}$;

 \item $\Sigma$ is a finite alphabet of input symbols and
$\Sigma'=\Sigma\cup \{|\hspace{-1.5mm}c,\$\}$ (where $|\hspace{-1.5mm}c$ will be used as the left end-marker and $\$$ as the right end-marker);
\item $|0\rangle\in Q$ is the initial quantum state;

\item $Q_a \subseteq Q$ denotes the set of
accepting basis states;

\item $U_{\sigma}$'s  ($\sigma\in\Sigma'$) are unitary operators.
\end{itemize}

The quantum state space of this model will be the $|Q|$-dimensional Hilbert space denoted ${\cal H}_Q$.

Each quantum basis state $|i\rangle$  in ${\cal H}_Q$ can be represented by a column vector with the $(i+1)$th entry being $1$ and other entries being $0$.
With this notational convenience we can describe the above model as follows:
\begin{enumerate}
  \item  The initial state $|0\rangle$ is represented as $|q_0\rangle=(1,\overbrace{0,\cdots,0}^{|Q|-1})^\mathrm{T}$.
  \item  The accepting set $Q_a$ corresponds to the projective operator $P_{acc}=\sum_{|i\rangle\in Q_a}|i\rangle\langle i|$.
\end{enumerate}

The computation of an MO-1QFA ${\cal M}$ on an input string
$x=\sigma_{1}\sigma_{2}\cdots\sigma_{n}\in\Sigma^{*}$ goes as
follows:  ${\cal M}$ ``reads" the input string from the left end-marker to the right end-marker,  symbol by symbol, and  the unitary matrices $U_{|\hspace{-1mm}c}, U_{\sigma_1},U_{\sigma_2},\cdots,U_{\sigma_n},U_{\$}$ are applied, one by one, always on the current state, starting with  $|0\rangle$ as the initial state. Finally, the projective
measurement $\{P_{acc}, I-P_{acc}\}$ is performed on the final state, in order to accept or reject
the input. Therefore, for an input string  $w=\sigma_1\sigma_2\cdots \sigma_n$,  ${\cal M}$ has the accepting probability
\begin{equation}
Pr[{\cal M}\ \mbox{accepts}\  w] =\|P_{acc}U_{\$}U_{\sigma_n}\cdots U_{\sigma_2}U_{\sigma_1}U_{|\hspace{-1mm}c}|0\rangle\|^2
\end{equation}
and the rejecting probability
\begin{equation}
 Pr[{\cal M}\ \mbox{rejects}\  w]=1- Pr[{\cal M}\ \mbox{accepts}\  w].
\end{equation}
\end{definition}

\begin{definition}A promise version of a measure-once quantum finite automaton (pvMO-1QFA) ${\cal M}$ is specified by a 6-tuple
 \begin{equation}
{\cal M}=(Q,\Sigma,\{U_{\sigma}\,|\, \sigma\in\Sigma'  \},|{0}\rangle,Q_a,Q_r)
\end{equation}
where:
  $Q$, $\Sigma$, $\Sigma'$, $|0\rangle$, $Q_a$, $U_{\sigma}$ are  as defined in an MO-1QFA,  $Q_r \subseteq Q$ ($Q_r\cap Q_a=\emptyset$) denotes the set of
rejecting basis states. The  set $Q_r$ corresponds to the projective operator $P_{rej}=\sum_{|i\rangle\in Q_r}|i\rangle\langle i|$.

For an input string $w=\sigma_1\sigma_2\cdots \sigma_n$,  ${\cal M}$ has the accepting probability
\begin{equation}
Pr[{\cal M}\ \mbox{accepts}\  w] =\|P_{acc}U_{\$}U_{\sigma_n}\cdots U_{\sigma_2}U_{\sigma_1}U_{|\hspace{-1mm}c}|0\rangle\|^2
\end{equation}
and the rejecting probability
\begin{equation}
 Pr[{\cal M}\ \mbox{rejects}\  w]=\|P_{rej}U_{\$}U_{\sigma_n}\cdots U_{\sigma_2}U_{\sigma_1}U_{|\hspace{-1mm}c}|0\rangle\|^2.
\end{equation}
\end{definition}

Another interesting  (important) model of {\em two-way finite automata with quantum and classical states} (2QCFA)--was introduced by Ambainis and Watrous \cite{Amb02} and   explored in \cite{LF15,Yak10,ZhgQiu112,Zhg12,Zhg13a,Zhg13}.  If restricting the read-head in a 2QCFA to be {\it one-way}, then it is natural to get {\em one-way finite automata with quantum and classical states} (1QCFA). That is,
 1QCFA are one-way versions of 2QCFA, studied by Zheng and Qiu {\it et al} \cite{ZhgQiu112}. It is worth mentioning that more previously a different but more practical model called as {\it one-way quantum finite automata together with classical states} (1QFAC) was proposed and studied by Qiu { \it et al} \cite{Qiu15}.
Informally, a 1QCFA can be seen as a DFA which has an access to a quantum
 memory of a constant size (dimension), upon which the automaton performs
quantum transformations and projective measurements. Given a finite set of quantum basis states $Q$, we denote by $\mathcal{H}(Q)$
the Hilbert space spanned by $Q$. Let
$\mathcal{U}(\mathcal{H}(Q))$ and $\mathcal{O}(\mathcal{H}(Q))$
denote  the sets of unitary operators and projective measurements over $\mathcal{H}(Q)$, respectively.

\begin{definition}
A {\em one-way finite automaton with quantum and classical states} (1QCFA) ${\cal A}$ is specified by a 10-tuple
\begin{equation}
{\cal M}=(Q,S,\Sigma,\Theta,\Delta,\delta,|q_{0}\rangle,s_{0},S_{a},S_{r})
\end{equation}
where:
\begin{enumerate}
\item $Q$ is a finite set of orthonormal quantum  states,  a basis of a Hilbert space  ${\mathcal{H}}_Q$ spanned by states from $Q$.
\item $S$ is a finite set of classical states.

\item $\Sigma$ is a finite alphabet of input symbols and
$\Sigma'=\Sigma\cup \{\cent ,\$\}$, where $\cent$ will be used as the left end-marker and $\$$ as the right end-marker.

\item $|q_0\rangle\in Q$ is the initial quantum state.
\item $s_0$ is the initial classical state.
\item $S_{a}\subset S$ and $S_{r}\subset S$, where $S_{a}\cap S_{r}=\emptyset$, are  sets of
the classical accepting and rejecting states, respectively.

\item $\Theta$ is a quantum transition function
\begin{equation}
\Theta: S\setminus(S_{a}\cup S_{r})\times \Sigma'\to \mathcal{U}(\mathcal{H}(Q)),
\end{equation}
assigning to each pair $(s,\gamma)\in S\setminus(S_{a}\cup S_{r})\times \Sigma'$ a
 unitary transformation.

\item $\Delta$ is a mapping
\begin{equation}
\Delta:S\times \Sigma'\rightarrow
\mathcal{O}(\mathcal{H}(Q)),
\end{equation}
where each $\Delta(s,\gamma)$ corresponds to a projective measurement
(a projective measurement will be taken each time a unitary
   transformation is applied; if we do not need a measurement,
we denote that $\Delta(s,\gamma)=I$, and we assume the result of the measurement to be a fixed $c$).

\item $\delta$ is a special transition function of classical states.
Let the results set of the measurement be $\mathcal{C}=\{c_{1},c_{2},\dots$,
$c_{s}\}$, then
\begin{equation}
\delta:S\times \Sigma' \times \mathcal{C}\rightarrow
S,
\end{equation}
 where $\delta(s,\gamma)(c_{i})=s'$ means
that if a tape symbol $\gamma\in \Sigma'$ is being
 scanned and the projective measurement result is $c_{i}$, then the  state $s$ is changed to $s'$.
\end{enumerate}
\end{definition}

Given an input $w=\sigma_1\cdots\sigma_n$, the word on the tape will be seen as
$w=\cent w\$$ (for convenience, we denote $\sigma_0=\cent $ and $\sigma_{n+1}=\$$).
Now, we define the behavior of 1QCFA ${\cal M}$ on any input word $w$.
The computation starts in the classical state $s_0$ and the quantum state $|q_0\rangle$. After that
transformations associated with symbols in the word  $\sigma_0\sigma_1\cdots, \sigma_{n+1}$ are applied in succession.
A transformation associated
with a state $s\in S$ and a symbol $\sigma\in\Sigma'$ consists of three steps:
\begin{enumerate}
\item  The unitary transformation $\Theta(s,\sigma)$ is applied to the current quantum state
$|\phi\rangle$, yielding the new state
$|\phi'\rangle=\Theta(s,\sigma)|\phi\rangle$.

\item The observable $\Delta(s,\sigma)=\mathcal{O}$ is measured on
$|\phi'\rangle$. The set of possible results is $\mathcal{C}=\{c_1, \cdots, c_s\}$.
According to   quantum mechanics principles, such a
measurement yields the classical outcome $c_k$ with probability
$p_k=||P(c_k)|\phi'\rangle||^2$, and the quantum state of ${\cal M}$
collapses to $P(c_k)|\phi'\rangle/\sqrt{p_k}$.

\item The current classical state $s$ is changed to $\delta(s,\sigma)(c_k) =s'.$

\end{enumerate}
An input word $w$ is assumed to be accepted (rejected) if and only if the automaton enters at the end
 an accepting (rejecting) state.  It is  assumed that $\delta$ is well defined so that 1QCFA ${\cal M}$ always accepts or
rejects at the end of the computation.

\begin{definition}
 An MO-1QFA, 1QCFA ${\cal M}$ recognizes a language $L$ with bounded error  $\varepsilon$ if for every $w\in\Sigma^*$
\begin{itemize}
\item  $w\in L$ if and only if $Pr[{\cal M}\  \text{accepts}\  w]\geq 1-\varepsilon$.
\item  $ w\notin L$ if and only if $Pr[{\cal M}\ \text{rejects}\  w]\geq 1-\varepsilon$.
\end{itemize}
\end{definition}

\begin{definition}
 A  pvMO-1QFA ${\cal M}$ recognizes a promise problem $A=(A_{yes},A_{no})$ with an error probability $\varepsilon$ if for every $w\in\Sigma^*$
\begin{itemize}
\item  $ w\in A_{yes}$ if and only if $Pr[{\cal M}\  \text{accepts}\  w]\geq 1-\varepsilon$.
\item  $w\in A_{no}$ if and only if $Pr[{\cal M}\ \text{rejects}\  w]\geq 1-\varepsilon$.
\end{itemize}
\end{definition}
\begin{definition}
 A promise problem $A = (A_{yes}, A_{no})$ is solved  by  a pvMO-1QFA  ${\cal M}$  with an error probability $\varepsilon$ if for every  $w\in A_{yes}\cup A_{no}$
\begin{itemize}
\item $ w\in A_{yes}$ implies that $Pr[{\cal M}\  \text{accepts}\  w]\geq 1-\varepsilon$, and
\item $ w\in  A_{no}$ implies that $Pr[{\cal M}\ \text{rejects}\  w]\geq 1-\varepsilon$.
\end{itemize}
\end{definition}
If $\varepsilon=0$, we say that the automaton ${\cal M}$ solves (recognizes) the promise problem $A$ exactly.

\section{Properties of pvDFA}\label{s-propeties}

We will now study closure properties
of promise problems recognized or solved by pvDFA.

\begin{theorem}\label{Th-rec}
A promise problem $A = (A_{yes}, A_{no})$ can be recognized by a pvDFA ${\cal A}$ iff both $A_{yes}$ and $A_{no}$ are regular.
\end{theorem}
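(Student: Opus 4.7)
The plan is to prove both directions directly from the definition of recognition by a pvDFA, which asks for biconditional agreement between $A_{yes}$ and the image $S_a$, and between $A_{no}$ and the image $S_r$.

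For the forward direction, suppose a pvDFA $\mathcal{A}=(S,\Sigma,\delta,s_0,S_a,S_r)$ recognizes $A$. I would form two ordinary DFAs from $\mathcal{A}$: the DFA $\mathcal{A}_1=(S,\Sigma,\delta,s_0,S_a)$, which clearly recognizes the language $\{w:\widehat{\delta}(s_0,w)\in S_a\}$, and the DFA $\mathcal{A}_2=(S,\Sigma,\delta,s_0,S_r)$, which recognizes $\{w:\widehat{\delta}(s_0,w)\in S_r\}$. The recognition hypothesis gives $A_{yes}=\mathcal{L}(\mathcal{A}_1)$ and $A_{no}=\mathcal{L}(\mathcal{A}_2)$, so both are regular.

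For the converse, assume DFAs $\mathcal{B}_1=(S_1,\Sigma,\delta_1,s_0^{(1)},F_1)$ and $\mathcal{B}_2=(S_2,\Sigma,\delta_2,s_0^{(2)},F_2)$ recognize $A_{yes}$ and $A_{no}$, respectively. I would use a standard product construction: let $S=S_1\times S_2$, initial state $(s_0^{(1)},s_0^{(2)})$, componentwise transition $\delta((p,q),\sigma)=(\delta_1(p,\sigma),\delta_2(q,\sigma))$, and crucially choose the accepting and rejecting sets to be disjoint by putting
\begin{equation*}
S_a=F_1\times(S_2\setminus F_2),\qquad S_r=(S_1\setminus F_1)\times F_2.
\end{equation*}
Then $\widehat{\delta}(s_0,w)\in S_a$ iff $w\in\mathcal{L}(\mathcal{B}_1)$ and $w\notin\mathcal{L}(\mathcal{B}_2)$, which, because $A_{yes}\cap A_{no}=\emptyset$, is equivalent to $w\in A_{yes}$. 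The symmetric argument handles $S_r$ and $A_{no}$, establishing recognition.

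The only subtle point, and the one I would be careful to spell out, is ensuring the required disjointness $S_a\cap S_r=\emptyset$ while still getting the \emph{iff}'s; the idea of intersecting $F_1$ with the complement of $F_2$ (and vice versa) is what does the work, and it is valid precisely because the promise components are disjoint so that no string is accepted by both $\mathcal{B}_1$ and $\mathcal{B}_2$. No genuine obstacle arises beyond this bookkeeping.
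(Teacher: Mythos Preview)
Your proposal is correct and follows essentially the same approach as the paper: extract two DFAs from the pvDFA for the forward direction, and use the product construction with $S_a=F_1\times(S_2\setminus F_2)$ and $S_r=(S_1\setminus F_1)\times F_2$ for the converse. The only cosmetic difference is that the paper explicitly deletes the (unreachable) states $F_1\times F_2$ from the product state set, whereas you leave them in as states that are neither accepting nor rejecting; this does not affect the argument. One small clarification: the disjointness $S_a\cap S_r=\emptyset$ actually follows immediately from the definitions (the first coordinate lies in $F_1$ for $S_a$ and in $S_1\setminus F_1$ for $S_r$), independently of the promise; the promise disjointness is what you need for the \emph{iff}, exactly as you use it.
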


\begin{proof}
($\Rightarrow$)
Suppose that a  promise problem $A$ can be recognized by a pvDFA ${\cal A}=(S,\Sigma,\delta,s_0, S_a,S_r)$. In such a case,    for all $w\in\Sigma^*$, $w\in A_{yes}$ if and only if $\widehat{\delta}(s_0,w)\in S_a$.
Let DFA ${\cal A}_{y}= (S,\Sigma,\delta,s_0, S_a)$. Obviously, $A_{yes}$ is recognized by ${\cal A}_{y}$ and therefore $A_{yes}$ is regular. Using similar argument, one can show that $A_{no}$ is regular.

($\Leftarrow$) Let us assume that the set $A_{yes}$  can be recognized by a DFA ${\cal A}^1=(S^1,\Sigma,\delta^1,s_{0}^1, S_{a}^1)$ and $A_{no}$  can be recognized by a DFA ${\cal A}^2=(S^2,\Sigma,\delta^2,s_{0}^2, S_{a}^2)$. We now consider the following pvDFA ${\cal A}=(S,\Sigma,\delta,s_{0}, S_{a},S_r)$ where
\begin{itemize}
\item $S=(S_1\times S_2)\setminus (S_{a}^1\times S_{a}^2)$;
\item $s_{0}=\langle s_{0}^1,  s_{0}^2\rangle$;
\item $\delta(\langle s^1,s^2\rangle, \sigma)=\langle \delta^1(s^1,\sigma),  \delta^2(s^2,\sigma)\rangle$;
\item $S_a=S_{a}^1\times (S^2\setminus S_{a}^2 )$ and $S_r=(S^1\setminus S_{a}^1 ) \times  S_{a}^2$.
\end{itemize}

For any $w\in \Sigma^*$,  we prove first that  $s=\widehat{\delta}(s_0,w)\not\in S_{a}^1\times S_{a}^2$.
Let us assume that $s=\langle s^1,  s^2\rangle\in S_{a}^1\times S_{a}^2$. We have $\widehat{\delta}(s_0,w)=\langle \widehat{\delta^1}(s_0^1,w), \widehat{\delta^2}(s_0^2,w)\rangle=\langle s^1,  s^2\rangle$. Therefore,  $\widehat{\delta^1}(s_0^1,w)=s_1\in  S_{a}^1$ and $\widehat{\delta^2}(s_0^2,w)=s_2\in  S_{a}^2$. This implies that  $w\in A_{yes}$ and $w\in A_{no}$, which is a contradiction.

If $w\in A_{yes}$, then $s^1=\widehat{\delta^1}(s_0^1,w)\in S_{a}^1$ and $s^2=\widehat{\delta^2}(s_0^2,w)\not\in S_{a}^2$. Therefore,  $\widehat{\delta}(s_0,w)=\langle\widehat{\delta^1}(s_0^1,w),\widehat{\delta^2}(s_0^2,w)\rangle=\langle s^1,  s^2\rangle\in S_{a}^1\times (S^2\setminus S_{a}^2 )=S_a$.

If $w\in \Sigma^*$ is such that $\widehat{\delta}(s_0,w)\in S_a$, then $\widehat{\delta}(s_0,w)=\langle\widehat{\delta^1}(s_0^1,w),\widehat{\delta^2}(s_0^2,w)\rangle=\langle s^1,  s^2\rangle\in S_{a}^1\times (S^2\setminus S_{a}^2 )$. We have therefore $\widehat{\delta^1}(s_0^1,w)\in S_{a}^1$ and  $w\in A_{yes}$.

With a similar argument as above, we can show that  for any $w\in \Sigma^*$,  $w\in A_{no}$ if and only if $\widehat{\delta}(s_0,w)\in S_r$.

Therefore the  promise problem $A = (A_{yes}, A_{no})$ can be recognized by the pvDFA ${\cal A}$.
\end{proof}

\begin{remark}
If a promise problem $A$ is recognized by a pvDFA ${\cal A}$, then  $A$ is solved by the same pvDFA ${\cal A}$. However, if a promise problem $A$ is solved by a pvDFA ${\cal A}$, it does not necessarily mean that  $A$ can be recognized by a pvDFA. For example, let us  consider  the  promise problems  $B^{l}=(B_{yes}^{l},B_{no}^{l})$ with $B_{yes}^{l}=\{a^ib^{i}\mid i\geq 0\}$ and $B_{no}^{l}=\{a^ib^{i+l}\mid i\geq 0\}$, where $l$ is a fix positive integer. The promise problem   $B^{l}$ can be solved   by a DFA \cite{GQZ14b}.  Therefore it  can be solved by a pvDFA.
  However, both $B_{yes}^{l}$ and $B_{no}^{l}$ are {\em nonregular languages}. Therefore  $B^{l}$ cannot be recognized by a pvDFA.
\end{remark}

\subsection{Pumping Lemmas}

The pumping lemma for pvDFA concerning recognition is similar to the classical one \cite{Hop}.
\begin{lemma}[{\bf Pumping Lemma I}]
Let a  promise problem $A = (A_{yes}, A_{no})$ can be recognized by a  pvDFA ${\cal A}$. Then there exists an integer $p \geq 1$, depending only on  ${\cal A}$, such that every string $w$ in $A_{yes}$ ($A_{no}$),  of length at least $p$, can be written as $w = xyz$ (i.e., w can be divided into three substrings), satisfying the following conditions:
\begin{itemize}
\item $|y| \geq 1$;
\item $|xy| \leq p$;
\item  $xy^tz\in A_{yes}\ (A_{no})$ for all integers $t\geq 0$.
\end{itemize}
\end{lemma}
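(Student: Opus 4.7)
The plan is to apply the classical pumping argument for DFAs, using the fact that a pvDFA has a deterministic transition function on a finite state set, together with the recognizability condition that directly ties membership in $A_{yes}$ (resp.\ $A_{no}$) to ending in $S_a$ (resp.\ $S_r$).

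First, I would set $p = |S|$, where $S$ is the state set of the recognizing pvDFA ${\cal A}=(S,\Sigma,\delta,s_0,S_a,S_r)$. Let $w = \sigma_1\sigma_2\cdots\sigma_n \in A_{yes}$ with $n \geq p$. Consider the sequence of states $q_0 = s_0,\ q_1 = \widehat{\delta}(s_0,\sigma_1),\ \ldots,\ q_p = \widehat{\delta}(s_0,\sigma_1\cdots\sigma_p)$. These are $p+1$ states from a set of size $p$, so by the pigeonhole principle there are indices $0 \leq i < j \leq p$ with $q_i = q_j$. Write $x = \sigma_1\cdots\sigma_i$, $y = \sigma_{i+1}\cdots\sigma_j$, $z = \sigma_{j+1}\cdots\sigma_n$. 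Then $|y| = j-i \geq 1$ and $|xy| = j \leq p$, as required.

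Next I would verify the pumping condition. Because $\delta$ is deterministic and $\widehat{\delta}(s_0,x) = \widehat{\delta}(s_0,xy) = q_i$, repeated insertions or deletions of $y$ in the middle leave the state after reading the first two substrings unchanged; formally, $\widehat{\delta}(s_0, xy^t) = q_i$ for every $t \geq 0$. Continuing to read $z$ from $q_i$ yields $\widehat{\delta}(s_0, xy^tz) = \widehat{\delta}(q_i, z) = \widehat{\delta}(s_0, w)$. Since $w \in A_{yes}$ and ${\cal A}$ recognizes $A$, we have $\widehat{\delta}(s_0,w) \in S_a$, hence $\widehat{\delta}(s_0, xy^tz) \in S_a$, which by recognizability forces $xy^tz \in A_{yes}$.

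The argument for $w \in A_{no}$ is identical, replacing $S_a$ by $S_r$; the recognizability clause for $A_{no}$ guarantees $xy^tz \in A_{no}$. I do not anticipate a real obstacle here: the proof is a direct transplant of the standard DFA pumping lemma, and the only subtle point is that we must invoke the ``iff'' direction of recognizability (rather than just ``solves''), since solvability would only give a one-way implication and would not permit us to conclude membership in $A_{yes}$ or $A_{no}$ from landing in $S_a$ or $S_r$ alone. This is precisely why the corresponding pumping statement fails for the solvability mode, which is consistent with the remark preceding the lemma.
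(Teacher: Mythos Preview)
Your proof is correct and follows exactly the approach the paper intends: the paper does not spell out a proof of Pumping Lemma I but simply remarks that it ``is similar to the classical one'' and cites Hopcroft--Ullman, which is precisely the pigeonhole argument you carry out. Your observation that the ``iff'' direction of recognizability is essential (and that this is what separates Pumping Lemma I from Pumping Lemma II) is accurate and well placed.
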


The  pumping
lemma for pvDFA concerning solvability has quite a different form than the above Pumping Lemma.

\begin{lemma}[{\bf Pumping Lemma II}]\label{pumpingLemmaI}
Let a promise problem $A = (A_{yes}, A_{no})$ can be solved by  a pvDFA ${\cal A}$. Then there exists an integer $p\geq 1$, depending only on  ${\cal A}$, such that every string $w$ in $A_{yes}$ ($A_{no}$),  of length at least $p$, can be written as $w = xyz$ (i.e., w can be divided into three substrings), satisfying the following conditions:
\begin{itemize}
\item $|y| \geq 1$;
\item $|xy| \leq p$;
\item $xy^tz\notin A_{no}\ (A_{yes})$ for all integers $t\geq 0$.
\end{itemize}
\end{lemma}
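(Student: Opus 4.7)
The plan is to mimic the classical pumping lemma argument, using the state set of ${\cal A}=(S,\Sigma,\delta,s_0,S_a,S_r)$ as the source of a pigeonhole collision, and then carefully interpret what "solvability" (as opposed to "recognizability") allows us to conclude about the pumped strings. I would take $p=|S|$ and work only with the deterministic trajectory of ${\cal A}$ on the input.

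First I would fix $w\in A_{yes}$ with $|w|\geq p$ (the case $w\in A_{no}$ is completely symmetric, swapping the roles of $S_a$ and $S_r$). Because ${\cal A}$ solves $A$, the hypothesis $w\in A_{yes}$ forces $\widehat{\delta}(s_0,w)\in S_a$. Next, consider the sequence of states $s_0,\widehat{\delta}(s_0,w_1),\widehat{\delta}(s_0,w_1w_2),\ldots,\widehat{\delta}(s_0,w_1\cdots w_p)$ visited while reading the first $p$ symbols of $w$. This is a sequence of $p+1$ elements of $S$, so by the pigeonhole principle there exist indices $0\leq i<j\leq p$ with $\widehat{\delta}(s_0,w_1\cdots w_i)=\widehat{\delta}(s_0,w_1\cdots w_j)$. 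Set $x=w_1\cdots w_i$, $y=w_{i+1}\cdots w_j$ and $z=w_{j+1}\cdots w_{|w|}$, so that $|y|\geq 1$, $|xy|\leq p$ and $w=xyz$.

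The determinism of $\delta$ then gives $\widehat{\delta}(s_0,xy^tz)=\widehat{\delta}(s_0,w)\in S_a$ for every $t\geq 0$, by the usual "loop" argument: reading $y$ from the shared state loops back to it, so any number of repetitions of $y$ leaves the state reached just after reading $xy$ unchanged.

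The main obstacle, and the whole reason Pumping Lemma II has a weaker conclusion than Pumping Lemma I, is that in solvability mode the pvDFA is only constrained on inputs lying in $A_{yes}\cup A_{no}$; for strings outside the promise, $\widehat{\delta}(s_0,\cdot)$ may be any state of $S$, and in particular one cannot deduce $xy^tz\in A_{yes}$. What one can deduce, however, is the negative statement claimed: if for some $t\geq 0$ one had $xy^tz\in A_{no}$, then by the solvability hypothesis one would need $\widehat{\delta}(s_0,xy^tz)\in S_r$, contradicting the equality $\widehat{\delta}(s_0,xy^tz)\in S_a$ established above together with $S_a\cap S_r=\emptyset$. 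Hence $xy^tz\notin A_{no}$ for every $t\geq 0$, which is exactly the required conclusion. Finally, running the same argument with $w\in A_{no}$ and the roles of $S_a$ and $S_r$ (and of $A_{yes}$ and $A_{no}$) exchanged yields the parenthetical version of the statement, completing the proof.
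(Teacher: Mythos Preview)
Your proof is correct and follows essentially the same approach as the paper: set $p=|S|$, apply the pigeonhole principle to the state sequence along the first $p$ symbols to extract the loop $y$, observe that $\widehat{\delta}(s_0,xy^tz)\in S_a$ for all $t$, and derive a contradiction from the solvability hypothesis if some $xy^tz$ were in $A_{no}$. Your write-up is in fact slightly more careful than the paper's, since you explicitly enforce $j\le p$ (guaranteeing $|xy|\le p$) and explicitly invoke $S_a\cap S_r=\emptyset$ for the contradiction.
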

\begin{proof}
Let  pvDFA  ${\cal A}=(S,\Sigma,\delta,s_0, S_a,S_r)$ and $p=|S|$ be the number of the of states of ${\cal A}$. For a word $w=\sigma_1\ldots\sigma_n\in A_{yes}\ (A_{no})$,  we denote the computation of  ${\cal A}$ on $w$ by the following sequence of transitions:
\begin{equation}
  s_0\xrightarrow{\sigma_1} s_1  \xrightarrow{\sigma_2} \cdots  \xrightarrow{\sigma_n}s_n,
\end{equation}
where $s_n\in S_a\ (S_r)$.

If $n\geq p$, then there exist $i< j$ such that $s_i=s_j$. Let $x=\sigma_1\ldots\sigma_i$, $y=\sigma_{i+1} \ldots \sigma_j$ and $z=\sigma_{j+1}\ldots\sigma_n$.  We have, $\widehat{\delta}(s_0,x)=s_i$,  $\widehat{\delta}(s_i,y)=s_j$ and $\widehat{\delta}(s_j,z)=s_n\in S_a$. Therefore $\widehat{\delta}(s_i,y^*)=s_i$.

  If there exists an integer $t\geq 0$  such that $w=xy^tz\in A_{no}\ (A_{yes})$, then $\widehat{\delta}(s_0,w)=\widehat{\delta}(s_0,xy^tz)=\widehat{\delta}(s_i,y^tz)=\widehat{\delta}(s_i,z)=s_n\in S_a\ (S_r)$, which is a contradiction. Therefore, we have $xy^tz\notin A_{no}\ (A_{yes})$ for all $t\geq 0$.
\end{proof}

In the following example it will be shown how  Pumping Lemma II can be used to prove that a promise problem can not be solved by pvDFA.
\begin{example}\label{eample-1}
 Let us consider the promise problem $C=({C_{yes},C_{no}})$ with $C_{yes}=\{a^nb^n\}$ and $C_{no}=\{a^nb^m\,|\, n\neq m\}$.
 Assume that $C$ can  be solved by a pvDFA ${\cal A}$ and $p$ is the constant for the pumping lemma.
Choose $w=a^pb^p\in A_{yes}$. Clearly, $|w|>p$. By the Pumping Lemma II, $w=xyz$ for some $x,y,z\in\Sigma^*$ such that (1) $|xy|\leq p$, (2) $|y|\geq 1$, and (3) $xy^tz\not\in A_{no}$ for all $t\geq 0$. By (1) and (2), we have $y=a^{k}$, $1\leq k\leq p$.   However, $xy^2z=a^{p+k}b^p\in A_{no}$. Therefore, (3) does not hold. The promise problem $C$ therefore does not satisfy the pumping property of the Pumping Lemma II.
Hence,  the promise problem $C$ can not be solved by any pvDFA.

 \end{example}

\subsection{Closure properties}

Let us have promise problems $A = (A_{yes}, A_{no})$ and  $B = (B_{yes}, B_{no})$ over the same  alphabet\footnote{When we take the union or intersection of two promise problems, they might have different alphabets. However, if  $P = (P_{yes}, P_{no})$ is a promise problem over  alphabet $\Sigma$, then we can also think of $P$ over any finite alphabet that is a superset of $\Sigma$.  See \cite{Hop} for more details.}. The complement, intersection and union operations on such promise  problems will be defined as follows.

\begin{itemize}

  \item Complement: $\overline{A}=(\overline{A}_{yes}, \overline{A}_{no})$, where $\overline{A}_{yes}=A_{no}$ and $\overline{A}_{no}=A_{yes}$.

\item  Intersection: $C=A\cap B=(C_{yes}, C_{no})$, where $C_{yes}=A_{yes}\cap B_{yes}$ and $C_{no}=A_{no}\cap B_{no}$.

  \item Union: if $(A_{yes}\cup B_{yes})\cap (A_{no}\cup B_{no})\neq \emptyset$, then the union of $A$ and $B$ will be undefined; otherwise the union  $C=A\cup B=(C_{yes}, C_{no})$, where $C_{yes}=A_{yes}\cup B_{yes}$ and $C_{no}=A_{no}\cup B_{no}$.
\end{itemize}
There seems to be several
other ways one could  try to define intersection and union of $A$ and $B$.  We will now try to argue  that our definitions are reasonable. Let us assume that  Alice has two subsets $A_{yes}$ and $A_{no}$ over $\Sigma^*$.   If  Alice would be asked for an $x\in  A_{yes}\cup A_{no}$ whether  $x\in A_{yes}$ or $x\in A_{no}$, then she should be able to  answer  ``yes" or ``no" (by checking whether $x\in A_{yes}$ or $x\in A_{no}$). Let us assume also that  Bob has two subsets $B_{yes}$ and $B_{no}$ over $\Sigma^*$. If  Bob  would be asked for an $x\in  B_{yes}\cup B_{no}$ whether  $x\in B_{yes}$ or $x\in B_{no}$,   then he should be able to give  correct answer.
The intersection of two promise problems should be therefore such that for a given input,   both Alice and Bob are able to tell  whether a given input is in the yes--set or no--set.

 The union  of two promise problems should be therefore such that for a given input,   at least one of  Alice and Bob are able to tell whether it is in the yes--set or no--set, that is why union was defined in the way it was.

Let us now give several results concerning how promise
problems are closed on some operations in the case of recognizability and solvability modes.

\begin{theorem}
If a promise problems $A$ can be recognized (solved) by a pvDFA, then  $\overline{A}$ can be recognized (solved) by a pvDFA.
\end{theorem}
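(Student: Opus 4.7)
The plan is to show that the same machine works for $\overline{A}$ with just a swap of the accepting and rejecting state sets. Given a pvDFA $\mathcal{A}=(S,\Sigma,\delta,s_0,S_a,S_r)$ that recognizes (or solves) $A=(A_{yes},A_{no})$, I would define
\begin{equation}
\mathcal{A}'=(S,\Sigma,\delta,s_0,S_a',S_r')\quad\text{with}\quad S_a'=S_r,\ S_r'=S_a.
\end{equation}
Since $S_a\cap S_r=\emptyset$, the new sets $S_a'$ and $S_r'$ are also disjoint, so $\mathcal{A}'$ is a legitimate pvDFA. The underlying state set, alphabet, transition function, and initial state are unchanged, so $\widehat{\delta}$ and therefore the state $\widehat{\delta}(s_0,w)$ reached on any input $w$ are exactly the same for $\mathcal{A}$ and $\mathcal{A}'$.

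For the recognition case, I would just translate the defining biconditions. By hypothesis, for every $w\in\Sigma^*$, $w\in A_{yes}\iff\widehat{\delta}(s_0,w)\in S_a$ and $w\in A_{no}\iff\widehat{\delta}(s_0,w)\in S_r$. Using $\overline{A}_{yes}=A_{no}$ and $\overline{A}_{no}=A_{yes}$, these become $w\in\overline{A}_{yes}\iff\widehat{\delta}(s_0,w)\in S_a'$ and $w\in\overline{A}_{no}\iff\widehat{\delta}(s_0,w)\in S_r'$, which is precisely the statement that $\mathcal{A}'$ recognizes $\overline{A}$.

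For the solvability case, the argument is the same except the implications go only one way and the hypothesis restricts attention to $w\in A_{yes}\cup A_{no}=\overline{A}_{yes}\cup\overline{A}_{no}$. From $w\in A_{yes}\Rightarrow\widehat{\delta}(s_0,w)\in S_a$ one obtains $w\in\overline{A}_{no}\Rightarrow\widehat{\delta}(s_0,w)\in S_r'$, and analogously $w\in\overline{A}_{yes}\Rightarrow\widehat{\delta}(s_0,w)\in S_a'$, so $\mathcal{A}'$ solves $\overline{A}$.

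There is no genuine obstacle here: the result is essentially a book-keeping observation that the only asymmetric ingredient in the pvDFA model is the pair $(S_a,S_r)$, and complementation of a promise problem is by definition exactly the swap of its two components. The only point worth flagging is that both modes (recognition and solvability) must be handled, but each is immediate once the machine $\mathcal{A}'$ is written down.
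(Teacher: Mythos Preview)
Your proposal is correct and follows exactly the paper's approach: swap the accepting and rejecting state sets of the given pvDFA to obtain $\mathcal{A}'=(S,\Sigma,\delta,s_0,S_r,S_a)$, which then recognizes (respectively, solves) $\overline{A}$. You have simply spelled out in detail the verification that the paper leaves as ``it is easy to see.''
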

\begin{proof}
Suppose that a  promise problem $A$ can be recognized (solved) by a pvDFA ${\cal A}=(S,\Sigma,\delta,s_0, S_a,S_r)$. Exchanging the sets of accepting states and rejecting states of the pvDFA ${\cal A}$, we get a new   pvDFA ${\cal A}'=(S,\Sigma,\delta,s_0, S_r,S_a)$. It is easy to see that $\overline{A}$ is recognized (solved) by the pvDFA ${\cal A}'$.
\end{proof}

\begin{theorem}
If promise problems $A$ and $B$ can be recognized by pvDFA, then their intersection can be also recognized by a pvDFA.
\end{theorem}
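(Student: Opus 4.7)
The plan is to exploit Theorem \ref{Th-rec}, which characterizes promise problems recognized by pvDFA as exactly those whose yes--set and no--set are both regular. So I would first apply that theorem to $A=(A_{yes},A_{no})$ and $B=(B_{yes},B_{no})$ to get that $A_{yes}$, $A_{no}$, $B_{yes}$, $B_{no}$ are all regular languages. Let $C=A\cap B=(C_{yes},C_{no})$ with $C_{yes}=A_{yes}\cap B_{yes}$ and $C_{no}=A_{no}\cap B_{no}$. Since the class of regular languages is closed under intersection, $C_{yes}$ and $C_{no}$ are both regular, and the direction $(\Leftarrow)$ of Theorem \ref{Th-rec} then immediately gives a pvDFA recognizing $C$.

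Before invoking that direction I need a one-line check that $C$ is a well-defined promise problem, i.e.\ that $C_{yes}\cap C_{no}=\emptyset$. But this is automatic: $C_{yes}\cap C_{no}\subseteq A_{yes}\cap A_{no}=\emptyset$, so the two components are disjoint.

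If one prefers a more explicit construction, I would instead carry out the standard product construction directly on the two pvDFA. Given ${\cal A}_1=(S^1,\Sigma,\delta^1,s_0^1,S_a^1,S_r^1)$ recognizing $A$ and ${\cal A}_2=(S^2,\Sigma,\delta^2,s_0^2,S_a^2,S_r^2)$ recognizing $B$, define ${\cal A}=(S^1\times S^2,\Sigma,\delta,\langle s_0^1,s_0^2\rangle,S_a^1\times S_a^2,S_r^1\times S_r^2)$ with coordinate-wise transitions $\delta(\langle s^1,s^2\rangle,\sigma)=\langle \delta^1(s^1,\sigma),\delta^2(s^2,\sigma)\rangle$. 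A straightforward induction on word length shows $\widehat{\delta}(\langle s_0^1,s_0^2\rangle,w)=\langle \widehat{\delta^1}(s_0^1,w),\widehat{\delta^2}(s_0^2,w)\rangle$, and from this $w\in C_{yes}$ iff the joint state lies in $S_a^1\times S_a^2$, and dually for $C_{no}$.

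There is no real obstacle here: the only thing to be slightly careful about is that the accepting and rejecting sets of the product automaton remain disjoint (which holds because $S_a^i\cap S_r^i=\emptyset$ for $i=1,2$) and that $C$ itself is a legitimate promise problem (handled by the disjointness observation above). I would present the short proof via Theorem \ref{Th-rec} as the main argument, since it reuses machinery already established, and mention the product construction as an alternative explicit witness.
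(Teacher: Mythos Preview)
Your proposal is correct. Your primary argument---apply Theorem~\ref{Th-rec} to obtain regularity of $A_{yes},A_{no},B_{yes},B_{no}$, use closure of regular languages under intersection, and then apply the $(\Leftarrow)$ direction of Theorem~\ref{Th-rec}---is a genuinely different route from the paper's own proof, which skips the characterization theorem and works directly with the product construction (your ``alternative'' is exactly what the paper does: it takes $S=S^1\times S^2$, $S_a=S_a^1\times S_a^2$, $S_r=S_r^1\times S_r^2$, and verifies the two biconditionals by hand). Your approach is shorter and cleanly reuses machinery already proved; the paper's explicit construction, on the other hand, exhibits a concrete recognizing pvDFA and in particular yields the state bound $|S^1|\cdot|S^2|$, which the appeal to Theorem~\ref{Th-rec} hides inside that theorem's proof. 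Your disjointness checks (both $C_{yes}\cap C_{no}=\emptyset$ and $(S_a^1\times S_a^2)\cap(S_r^1\times S_r^2)=\emptyset$) are fine and are points the paper leaves implicit.
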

\begin{proof}
Suppose that a  promise problem $A$ can be recognized by  a pvDFA ${\cal A}^1=(S^1,\Sigma,\delta^1,s_0^1, S_a^1,S_r^2)$ and a  promise problem $B$ can be recognized by a pvDFA ${\cal A}^2=(S^2,\Sigma,\delta^2,s_0^2, S_a^2,S_r^2)$.
We consider a pvDFA ${\cal A}=(S,\Sigma,\delta,s_0, S_a,S_r)$, where

\begin{itemize}
\item $S=S^1\times S^2$;
\item $s_{0}=\langle s_{0}^1,  s_{0}^2\rangle$;
\item $\delta(\langle s^1,s^2\rangle, \sigma)=\langle \delta^1(s^1,\sigma),  \delta^2(s^2,\sigma)\rangle$;
\item $S_a=S_{a}^1\times S_a^2 $ and $S_r=S_r^1\times  S_r^2$.
\end{itemize}

Let the promise problem $C=(C_{yes}, C_{no})$ be the intersection of the promise problems $A = (A_{yes}, A_{no})$ and  $B = (B_{yes}, B_{no})$.

If $w\in C_{yes}$, then $w\in A_{yes}\cap B_{yes}$. We have $\widehat{\delta^1}(s_0^1,w)\in S_a^1$ and  $\widehat{\delta^2}(s_0^2,w)\in S_a^2$. Therefore, we have
$\widehat{\delta}(s_0,w)=\widehat{\delta}(\langle s_0^1,s_0^2  \rangle,w)=\langle \widehat{\delta^1}(s_0^1,w), \widehat{\delta^2}(s_0^2,w)  \rangle\in S_a^1\times S_a^2=S_a$.

If $w\in \Sigma^*$ is such that $\widehat{\delta}(s_0,w)\in S_a$, we have $\widehat{\delta}(s_0,w)=\widehat{\delta}(\langle s_0^1,s_0^2  \rangle,w)=\langle \widehat{\delta^1}(s_0^1,w), \widehat{\delta^2}(s_0^2,w)  \rangle\in S_a= S_a^1\times S_a^2$. Therefore,  $\widehat{\delta^1}(s_0^1,w)\in S_a^1 $ and $\widehat{\delta^2}(s_0^2,w)\in S_a^2$, i.e. $w\in A_{yes}$ and $w\in B_{yes}$. Hence, $w\in  A_{yes}\cap B_{yes}=C_{yes}$.

Therefore, we have $w\in C_{yes}$ if and only if $\widehat{\delta}(s_0,w)\in S_a$. By a similar argument, we can show that   $w\in C_{no}$ if and only if $\widehat{\delta}(s_0,w)\in S_r$.
Hence, the promise problem $C=A\cap B$ can  be recognized by the pvDFA ${\cal A}$.
\end{proof}

\begin{theorem}
If promise problems $A$ and $B$ can be solved  by pvDFA, then their intersection can be solved also by a pvDFA.
\end{theorem}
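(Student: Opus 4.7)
The plan is to imitate the product construction used in the previous theorem for recognizability, but exploit the fact that solvability only constrains the automaton's behavior on inputs in the promise set $C_{yes}\cup C_{no}$. Concretely, suppose $A$ is solved by a pvDFA ${\cal A}^1=(S^1,\Sigma,\delta^1,s_0^1,S_a^1,S_r^1)$ and $B$ is solved by ${\cal A}^2=(S^2,\Sigma,\delta^2,s_0^2,S_a^2,S_r^2)$. I would form the product pvDFA ${\cal A}=(S^1\times S^2,\Sigma,\delta,\langle s_0^1,s_0^2\rangle,S_a,S_r)$ with the coordinate-wise transition $\delta(\langle s^1,s^2\rangle,\sigma)=\langle\delta^1(s^1,\sigma),\delta^2(s^2,\sigma)\rangle$ and set
\[
S_a=S_a^1\times S_a^2,\qquad S_r=S_r^1\times S_r^2.
\]
Since $S_a^i\cap S_r^i=\emptyset$ for $i=1,2$, we automatically have $S_a\cap S_r=\emptyset$, so ${\cal A}$ is a legitimate pvDFA.

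Next I would verify correctness on the promise. For any $w\in\Sigma^*$, a straightforward induction on $|w|$ gives $\widehat{\delta}(\langle s_0^1,s_0^2\rangle,w)=\langle\widehat{\delta^1}(s_0^1,w),\widehat{\delta^2}(s_0^2,w)\rangle$. If $w\in C_{yes}=A_{yes}\cap B_{yes}$, then by the solvability hypothesis for each component, $\widehat{\delta^1}(s_0^1,w)\in S_a^1$ and $\widehat{\delta^2}(s_0^2,w)\in S_a^2$, so $\widehat{\delta}(s_0,w)\in S_a^1\times S_a^2=S_a$. Symmetrically, if $w\in C_{no}=A_{no}\cap B_{no}$, then $\widehat{\delta}(s_0,w)\in S_r^1\times S_r^2=S_r$. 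This is exactly the requirement for ${\cal A}$ to solve $C=A\cap B$.

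The main observation, and the reason the proof is strictly easier than the recognizability analogue, is that we never have to describe or forbid the behavior of ${\cal A}$ on inputs outside the promise $C_{yes}\cup C_{no}$. In the recognition theorem one had to delete the ``conflict'' product states $S_a^1\times S_a^2$ to avoid a string being simultaneously witnessed as yes and no; here such strings simply need not lie in the promise, so they can be ignored. For this reason I do not anticipate any genuine obstacle; the only thing to be careful about is choosing $S_a$ and $S_r$ as pure products (not mixed with complements) so that both the yes-case and no-case of the solvability definition are enforced from the hypotheses on ${\cal A}^1$ and ${\cal A}^2$ simultaneously.
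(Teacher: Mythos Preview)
Your product construction is correct and would establish the theorem, but the paper's argument is considerably simpler and exploits an observation you missed. Since $C_{yes}=A_{yes}\cap B_{yes}\subseteq A_{yes}$ and $C_{no}=A_{no}\cap B_{no}\subseteq A_{no}$, the intersection $C$ is a \emph{subproblem} of $A$ in the sense of the paper's ordering $C\le A$; hence the very pvDFA ${\cal A}^1$ that solves $A$ already solves $C$, with no product construction at all. In particular the hypothesis that $B$ is solvable is never used. Your approach yields a machine with $|S^1|\cdot|S^2|$ states, whereas the paper's observation gives one with $\min(|S^1|,|S^2|)$ states, which is in line with the later state-complexity discussion. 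The product idea is the natural reflex carried over from the recognizability theorem, but solvability is monotone downward under subproblems, and that monotonicity is the whole proof here.
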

\begin{proof}
Let a promise problem $C=(C_{yes}, C_{no})$ be the intersection of  the two promise problems $A = (A_{yes}, A_{no})$ and  $B = (B_{yes}, B_{no})$.
Suppose that the promise problem $A = (A_{yes}, A_{no})$ can be solved by a pvDFA ${\cal A}$.
Since $C_{yes}=A_{yes}\cap B_{yes}\subset A_{yes}$ and $C_{no}=A_{no}\cap B_{no}\subset A_{no}$,  the  promise problem $C$ can be solved by ${\cal A}$.
\end{proof}

\begin{theorem}
Let promise problems A and B over an alphabet $\Sigma$ can be recognized by pvDFA and their union $C$ exists, then $C$ can be recognized also by a pvDFA.
\end{theorem}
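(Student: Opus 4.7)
The plan is to mimic the product construction used for intersection, but with a different choice of accepting and rejecting sets. Given pvDFA $\mathcal{A}^1 = (S^1, \Sigma, \delta^1, s_0^1, S_a^1, S_r^1)$ recognizing $A$ and $\mathcal{A}^2 = (S^2, \Sigma, \delta^2, s_0^2, S_a^2, S_r^2)$ recognizing $B$, I would construct $\mathcal{A} = (S, \Sigma, \delta, s_0, S_a, S_r)$ with $S = S^1 \times S^2$, initial state $s_0 = \langle s_0^1, s_0^2 \rangle$, component-wise transition $\delta(\langle s^1, s^2\rangle, \sigma) = \langle \delta^1(s^1,\sigma), \delta^2(s^2,\sigma)\rangle$, and with the accepting/rejecting sets tailored to union semantics:
\begin{equation}
S_a = (S_a^1 \times S^2) \cup (S^1 \times S_a^2), \qquad S_r = (S_r^1 \times S^2) \cup (S^1 \times S_r^2).
\end{equation}

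The key verification splits into two parts. The easier part is correctness: if $w \in A_{yes} \cup B_{yes}$, then either $\widehat{\delta^1}(s_0^1,w) \in S_a^1$ or $\widehat{\delta^2}(s_0^2,w) \in S_a^2$, so $\widehat{\delta}(s_0,w) \in S_a$; and conversely, any $w$ with $\widehat{\delta}(s_0,w) \in S_a$ must, via recognition by $\mathcal{A}^1$ or $\mathcal{A}^2$, lie in $A_{yes}$ or $B_{yes}$. An analogous equivalence holds for $S_r$ and $A_{no} \cup B_{no}$.

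The main obstacle, and the point where the hypothesis that the union $C$ exists becomes essential, is showing that $S_a$ and $S_r$ are disjoint so that $\mathcal{A}$ is a legitimate pvDFA. A potential overlap state $\langle s^1, s^2 \rangle$ would have, say, $s^1 \in S_a^1$ and $s^2 \in S_r^2$ (or the symmetric configuration). The plan is to show that no such overlapping state is reachable from $s_0$: if $\widehat{\delta}(s_0, w) = \langle s^1, s^2\rangle$ with $s^1 \in S_a^1$ and $s^2 \in S_r^2$, then by recognition of $A$ and $B$ we would have $w \in A_{yes} \cap B_{no} \subseteq (A_{yes}\cup B_{yes}) \cap (A_{no} \cup B_{no})$, contradicting the existence of $C$.

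To convert this into a formally valid pvDFA, I would simply restrict $\mathcal{A}$ to the subset of states reachable from $s_0$ (equivalently, remove the unreachable states in $S_a \cap S_r$ from the rejecting set); reachability is preserved under the component-wise transitions, so the restricted automaton is well defined, and the equivalences above continue to hold on all of $\Sigma^*$ because every computation visits only reachable states. This delivers a pvDFA that recognizes $C = A \cup B$.
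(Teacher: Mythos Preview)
Your proposal is correct and follows essentially the same route as the paper: a product construction with union semantics for the accept/reject sets, together with the observation that states where one component accepts and the other rejects are unreachable because of the hypothesis $(A_{yes}\cup B_{yes})\cap(A_{no}\cup B_{no})=\emptyset$. The only cosmetic difference is that the paper deletes the conflicting pairs $(S_a^1\times S_r^2)\cup(S_r^1\times S_a^2)$ from the state set up front, whereas you keep them and pass to the reachable subautomaton at the end; both manoeuvres rest on the same unreachability argument.
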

\begin{proof}
Suppose that the  promise problem $A$ with the alphabet $\Sigma$ can be recognized by a pvDFA ${\cal A}^1=(S^1,\Sigma,\delta^1,s_0^1, S_a^1,S_r^2)$ and the  promise problem $B$ with alphabet $\Sigma$ can be recognized by  a pvDFA ${\cal A}^2=(S^2,\Sigma,\delta^2,s_0^2, S_a^2,S_r^2)$.

We consider the pvDFA ${\cal A}=(S,\Sigma,\delta,s_0, S_a,S_r)$, where

\begin{itemize}
\item $S=(S_1\times S_2)\setminus ((S_{a}^1\times S_{r}^2)\cup (S_{r}^1\times S_{a}^2))$;
\item $s_{0}=\langle s_{0}^1,  s_{0}^2\rangle$;
\item $\delta(\langle s^1,s^2\rangle, \sigma)=\langle \delta^1(s^1,\sigma),  \delta^2(s^2,\sigma)\rangle$;
\item $S_a=\{\langle s^1,s^2 \rangle \mid s^1\in S_a^1 \ \mbox{or}\ s^2\in S_a^2\}$ and $S_r=\{\langle s^1,s^2 \rangle \mid s^1\in S_r^1 \ \mbox{or}\ s^2\in S_r^2\}$.
\end{itemize}

Let the promise problem $C=(C_{yes}, C_{no})$ be the union of  promise problems $A = (A_{yes}, A_{no})$ and  $B = (B_{yes}, B_{no})$.
Since the union $C=A\cup B$ exists,  we have $(A_{yes}\cup B_{yes})\cap (A_{no}\cup B_{no})= \emptyset$.

We prove now for any $w\in \Sigma^*$ that  $s=\widehat{\delta}(s_0,w)\not\in S_{a}^1\times S_{r}^2$.
Let us assume that $s=\langle s^1,  s^2\rangle\in S_{a}^1\times S_{r}^2$.
We have $\widehat{\delta}(s_0,w)=\langle \widehat{\delta^1}(s_0^1,w), \widehat{\delta^2}(s_0^2,w)\rangle=\langle s^1,  s^2\rangle$.
Therefore,  $\widehat{\delta^1}(s_0^1,w)=s_1\in  S_{a}^1$ and $\widehat{\delta^2}(s_0^2,w)=s_2\in  S_{r}^2$. From that it follows that  $w\in A_{yes}$ and $w\in B_{no}$.
Therefore, $w\in(A_{yes}\cup B_{yes})\cap (A_{no}\cup B_{no})=\emptyset$,
 which is a contradiction. By a similar argument we can prove that $\widehat{\delta}(s_0,w)\not\in S_{r}^1\times S_{a}^2$. Hence $S_a\cap S_r=\emptyset$.

If $w\in C_{yes}$, then $w\in A_{yes}\cup B_{yes}$. We have $\widehat{\delta^1}(s_0^1,w)\in S_a^1$ or  $\widehat{\delta^2}(s_0^2,w)\in S_a^2$. Therefore,
$\widehat{\delta}(s_0,w)=\widehat{\delta}(\langle s_0^1,s_0^2  \rangle,w)=\langle \widehat{\delta^1}(s_0^1,w), \widehat{\delta^2}(s_0^2,w)  \rangle\in S_a$.

If $w\in \Sigma^*$ is such that $\widehat{\delta}(s_0,w)\in S_a$, we have $\widehat{\delta}(s_0,w)=\widehat{\delta}(\langle s_0^1,s_0^2  \rangle,w)=\langle \widehat{\delta^1}(s_0^1,w), \widehat{\delta^2}(s_0^2,w)  \rangle\in S_a$. Therefore,  $\widehat{\delta^1}(s_0^1,w)\in S_a^1 $ and $\widehat{\delta^2}(s_0^2,w)\in S_a^2$, i.e. $w\in A_{yes}$ or $w\in B_{yes}$. Hence, $w\in  A_{yes}\cup B_{yes}=C_{yes}$.

Therefore,  $w\in C_{yes}$ if and only if $\widehat{\delta}(s_0,w)\in S_a$. By a similar argument, we can show that   $w\in C_{no}$ if and only if $\widehat{\delta}(s_0,w)\in S_r$.
Hence, the promise problem $C=A\cup B$ can  be recognized by the pvDFA ${\cal A}$.
\end{proof}

\begin{remark}
If promise problems $A$ and $B$ can be solved by pvDFA and their union $C$ exists, then $C$ may not be solved by a pvDFA. Indeed,
let $A = (A_{yes}, A_{no})$, where $A_{yes}=\{a^nb^n \,|\, n\mbox{ is odd}\}$ and  $A_{no}=\{a^nb^m \,|\, m\neq n\mbox{ and at least one of } m,\linebreak[0]n \mbox{ is even}\}$.
If $w\in A_{yes}$, then $\#_a(w)$ and $\#_b(w)$ are odd. If $w\in A_{no}$, at least one of $\#_a(w)$ and $\#_b(w)$ is even.  Obviously, we can design a pvDFA to solve the promise problem $A$.
 Let $B = (B_{yes}, B_{no})$, where $B_{yes}=\{a^nb^n \,|\, n\mbox{ is even}\}$ and  $B_{no}=\{a^nb^m \,|\, m\neq n\mbox{ and at least one } \linebreak[0]\mbox{of } m,n \mbox{ is odd}\}$.  Similarly,  we can design another pvDFA to solve the promise problem $B$.
 Now we consider their union $C=A\cup B=({C_{yes},C_{no}})$, where $C_{yes}=A_{yes}\cup B_{yes}=\{a^nb^n\}$ and $C_{no}=A_{no}\cup B_{no}=\{a^nb^m\,|\, n\neq m\}$. According to Example \ref{eample-1},
 $C$ can not be solved by any pvDFA.

\end{remark}

\subsection{Ordering}
Let us start with some basic definitions concerning ordering of promise problems.
Let $A=(A_{yes},A_{no})$ and $B=(B_{yes},B_{no})$ be two promise problems over an alphabet $\Sigma$.
We say that $A$ is a {\em subproblem} of  $B$, denoted by $A\leq B$, if $A_{yes}\subseteq B_{yes}$ and $A_{no}\subseteq B_{no}$.

We say  also that a pvDFA ${\cal A}$ is {\em equivalent} to a pvDFA ${\cal B}$ (denoted by ${\cal A}$=${\cal B}$) if ${\cal P}({\cal A})={\cal P}({\cal B})$.
 We say that a pvDFA ${\cal B}$  is {\em more powerful than or equivalent to} a pvDFA ${\cal A}$ (denoted by ${\cal B}\geq {\cal A}$ or ${\cal A}\leq {\cal B}$ ) if ${\cal P}({\cal A})\leq {\cal P}({\cal B})$.
It is clear that the set of all
pvDFA is a partially ordered set with the partial order `$\leq$'.
  We say that a pvDFA ${\cal B}$  is {\em more powerful} than  a pvDFA ${\cal A}$ (denoted by ${\cal B}> {\cal A}$ or ${\cal A}< {\cal B}$ ) if ${\cal P}({\cal A})\leq {\cal P}({\cal B})$
  and ${\cal P}({\cal A})\neq {\cal P}({\cal B})$.

The first outcome concering the impact of ordering on solvability of promise problems follows in  a straightforward way from basic definitions.
\begin{theorem}
 If a promise problem $A$ can be solved by a pvDFA ${\cal A}$ and ${\cal A}\leq {\cal B}$, then the promise problem $A$ can be solved by the pvDFA ${\cal B}$.
 \end{theorem}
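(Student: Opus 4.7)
The plan is to simply unwind the definitions. The hypothesis ``$A$ is solved by ${\cal A}$'' means that every $w \in A_{yes}$ lands in an accepting state of ${\cal A}$ and every $w \in A_{no}$ lands in a rejecting state of ${\cal A}$, which is exactly the statement $A_{yes} \subseteq {\cal P}_{yes}({\cal A})$ and $A_{no} \subseteq {\cal P}_{no}({\cal A})$. The hypothesis ${\cal A} \leq {\cal B}$ means, by definition, ${\cal P}({\cal A}) \leq {\cal P}({\cal B})$, i.e. ${\cal P}_{yes}({\cal A}) \subseteq {\cal P}_{yes}({\cal B})$ and ${\cal P}_{no}({\cal A}) \subseteq {\cal P}_{no}({\cal B})$.

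First I would chain these two inclusions to obtain $A_{yes} \subseteq {\cal P}_{yes}({\cal B})$ and $A_{no} \subseteq {\cal P}_{no}({\cal B})$. Then I would read off the conclusion: for any $w \in A_{yes} \cup A_{no}$, if $w \in A_{yes}$ then $w \in {\cal P}_{yes}({\cal B})$ so $\widehat{\delta}_{\cal B}(s_0^{\cal B}, w) \in S_a^{\cal B}$, and if $w \in A_{no}$ then $w \in {\cal P}_{no}({\cal B})$ so $\widehat{\delta}_{\cal B}(s_0^{\cal B}, w) \in S_r^{\cal B}$. This is precisely the definition of ``$A$ is solved by ${\cal B}$''.

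There is really no obstacle here; the statement is essentially a tautology once ``solvable'' is rephrased as a subset condition on the maximal promise problem ${\cal P}({\cal A})$. The only mildly delicate point is to resist conflating \emph{recognizes} and \emph{solves}: recognition would require the reverse inclusion as well (so that states in $S_a^{\cal B}$ outside ${\cal P}_{yes}({\cal A})$ do not falsify the iff), but for solvability the one-sided inclusion is all we need, and the pvDFA ${\cal B}$ is allowed to behave arbitrarily on strings outside $A_{yes} \cup A_{no}$.
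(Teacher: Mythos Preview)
Your proof is correct and matches the paper's approach exactly: the paper gives no explicit proof, remarking only that the result ``follows in a straightforward way from basic definitions.'' Your unwinding of ``solved by'' as the subset condition $A_{yes} \subseteq {\cal P}_{yes}({\cal A})$, $A_{no} \subseteq {\cal P}_{no}({\cal A})$ and chaining through ${\cal P}({\cal A}) \leq {\cal P}({\cal B})$ is precisely the intended straightforward argument.
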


 We say a pvDFA ${\cal A}$ is maximally powerful if there does not exist a pvDFA ${\cal B}$ such that ${\cal A}<{\cal B}$.

\begin{theorem}
 A pvDFA ${\cal A}$ is  maximally  powerful if and only if it is (essentially) a DFA.
\end{theorem}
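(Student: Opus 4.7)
The plan is to prove the two directions separately, with the contrapositive for the ``only if'' direction, and to identify ``essentially a DFA'' with the condition $\mathcal{P}_{yes}(\mathcal{A}) \cup \mathcal{P}_{no}(\mathcal{A}) = \Sigma^*$ (allowing us to ignore unreachable states in $S \setminus (S_a \cup S_r)$, which can be pruned without changing $\mathcal{P}(\mathcal{A})$).

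For the ``if'' direction, I would suppose $\mathcal{A}$ is a DFA in the sense $S_a \cup S_r = S$, so that $\mathcal{P}_{yes}(\mathcal{A}) \cup \mathcal{P}_{no}(\mathcal{A}) = \Sigma^*$, and argue that no pvDFA can strictly extend this promise. If $\mathcal{B}$ is any pvDFA with $\mathcal{A} \leq \mathcal{B}$, then $\mathcal{P}_{yes}(\mathcal{A}) \subseteq \mathcal{P}_{yes}(\mathcal{B})$ and $\mathcal{P}_{no}(\mathcal{A}) \subseteq \mathcal{P}_{no}(\mathcal{B})$. Because any pvDFA enforces $S_a \cap S_r = \emptyset$, the sets $\mathcal{P}_{yes}(\mathcal{B})$ and $\mathcal{P}_{no}(\mathcal{B})$ are disjoint subsets of $\Sigma^*$; but their two respective subsets already partition $\Sigma^*$, so both inclusions must be equalities. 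Hence $\mathcal{P}(\mathcal{A}) = \mathcal{P}(\mathcal{B})$, so no $\mathcal{B}$ is strictly more powerful and $\mathcal{A}$ is maximally powerful.

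For the ``only if'' direction I would argue the contrapositive: suppose $\mathcal{A} = (S,\Sigma,\delta,s_0,S_a,S_r)$ is not essentially a DFA, meaning $\mathcal{P}_{yes}(\mathcal{A}) \cup \mathcal{P}_{no}(\mathcal{A}) \neq \Sigma^*$. Then there exists a string $w \in \Sigma^*$ with $\widehat{\delta}(s_0,w) = s^* \in S \setminus (S_a \cup S_r)$, and $s^*$ is reachable from $s_0$. Construct $\mathcal{B} = (S,\Sigma,\delta,s_0,S_a \cup \{s^*\},S_r)$ by simply reclassifying $s^*$ as an accepting state (the transition function is untouched, and $S_a \cup \{s^*\}$ remains disjoint from $S_r$, so $\mathcal{B}$ is a valid pvDFA). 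Then $\mathcal{P}_{yes}(\mathcal{B}) = \mathcal{P}_{yes}(\mathcal{A}) \cup \{u \in \Sigma^* : \widehat{\delta}(s_0,u) = s^*\}$ and $\mathcal{P}_{no}(\mathcal{B}) = \mathcal{P}_{no}(\mathcal{A})$. The added set is nonempty (it contains $w$) and disjoint from $\mathcal{P}_{yes}(\mathcal{A})$, so $\mathcal{P}(\mathcal{A}) < \mathcal{P}(\mathcal{B})$, contradicting the maximality of $\mathcal{A}$.

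The main subtlety — and really the only potentially awkward point — is the interpretation of ``essentially a DFA.'' Strictly speaking, a pvDFA might have unreachable states in $S \setminus (S_a \cup S_r)$ while still covering all of $\Sigma^*$ with $\mathcal{P}_{yes} \cup \mathcal{P}_{no}$; such an automaton is equivalent to (and thus in the same $\leq$-maximal class as) the DFA obtained by deleting those unreachable states and is what is meant by ``essentially a DFA.'' So the reachability check in the contrapositive construction is the one place care is needed: if every neutral state were unreachable, the construction would fail, but then $\mathcal{A}$ would already be essentially a DFA, contradicting our assumption.
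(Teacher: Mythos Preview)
Your proof is correct and follows essentially the same approach as the paper's: both directions match, with the ``if'' direction observing that a full partition of $\Sigma^*$ cannot be strictly enlarged, and the ``only if'' direction reclassifying a reachable neutral state to build a strictly more powerful pvDFA. The only cosmetic differences are that the paper enlarges $S_r$ to all of $S\setminus S_a$ (rather than moving a single state into $S_a$) and handles reachability by assuming without loss of generality that $\mathcal{A}$ is state-minimal, whereas you argue reachability directly via the interpretation of ``essentially a DFA''; neither choice affects the substance of the argument.
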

\begin{proof}

If  ${\cal A}$ is a DFA, then ${\cal P}_{yes}({\cal A})=\Sigma^*\setminus {\cal P}_{no}({\cal A})$.
Therefore, there does not exist a promise problem $B$ such that ${\cal P}({\cal A}) <B$.  Therefore, there does exist a pvDFA ${\cal B}$ such that ${\cal A}<{\cal B}$, i.e.  ${\cal A}$ is maximally powerful.

Assume that  a pvDFA ${\cal A}$ is maximally powerful and ${\cal A}$ is not a DFA.
Suppose that the pvDFA  ${\cal A}=(S,\Sigma,\delta,s_0, S_a, S_r)$ and it is state minimal.
 We have that $S_a\cup S_r\neq S$ and $S_r$ is a proper subset of $S\setminus S_a$.
Let us now consider a new pvDFA ${\cal B}=(S,\Sigma,\delta,s_0, S_a,S\setminus S_a)$.
Suppose that ${\cal P}({\cal B})=(B_{yes},B_{no})$.
Therefore, there must exist some  $w\in B_{no}$ such that $\widehat{\delta}(s_0,w)\in S\setminus S_a$ and  $\widehat{\delta}(s_0,w)\not\in S_r$. Therefore, ${\cal P}_{no}({\cal A})$ is a proper subset of $B_{no}$. Since ${\cal P}_{yes}({\cal A})=B_{yes}$. We have ${\cal P}({\cal A})<{\cal P}({\cal B})$, which is a contradiction. Hence, ${\cal A}$ must be a DFA.
\end{proof}

%From the above two theorems, we can see that  if pvDFA ${\cal A}\leq {\cal B}$,   then we can use  ${\cal B}$ to replace  ${\cal A}$ in solving promise problems without changing   ${\cal B}$.
%Since DFA are maximally powerful, DFA ${\cal A}\leq {\cal B}$ will mean that ${\cal A}= {\cal B}$. Therefore,   we can  use DFA ${\cal B}$ to replace DFA ${\cal A}$  in solving promise problems  unless ${\cal B}={\cal A}$.

 We say that two pvDFA ${\cal A}$ and  ${\cal B}$ are comparable if ${\cal A}={\cal B}$ or ${\cal A}< {\cal B}$ or ${\cal A}> {\cal B}$.
  Two DFA are either equivalent or not comparable.  If a pvDFA ${\cal A}$  is a DFA, then there does not exist a pvDFA ${\cal B}$ such that ${\cal A}<{\cal B}$.
 Equivalence of two DFA can be seen as a special case of the equivalence of two pvDFA.

If pvDFA ${\cal A}={\cal B}$, then  ${\cal A}$ is a potential substitute for ${\cal B}$ in recognizing promise problems (languages).
If pvDFA ${\cal A}\geq {\cal B}$, then  ${\cal A}$ is a potential substitute for ${\cal B}$ in solving promise problems.   Therefore, it is important to determine the order of pvDFA.

In order to study determination of equivalence and ordering of two given pvDFA,  we now introduce
the concept of a {\em  bilinear machine} (BLM).

By \cite{LQ08}, a
BLM over an alphabet $\Sigma$ is a four-tuple ${\cal A}=( S,
\pi,\{M(\sigma)\}_{\sigma\in\Sigma},\eta)$, where $S$ is a finite set of states
 with $|S|=n$, $\pi\in {\mathbb{C}}^{1\times n}$, $\eta\in
{\mathbb{C}}^{n\times 1}$ and $M(\sigma)\in{\mathbb{C}}^{n\times n}$ for
$\sigma\in \Sigma$. The {\em word function} $f_{\cal
A}:\Sigma^{*}\rightarrow {\mathbb{C}}$  associated to  ${\cal A}$ is then defined as follows:
\begin{equation}
f_{\cal
A}(x)=\pi M(x_1)\dots M(x_n)\eta,
\end{equation}
 where  $x=x_1\dots x_n\in
\Sigma^{*}$.  Two BLMs ${\cal A}_1$ and ${\cal A}_2$ are said to be equivalent if $f_{{\cal A}_1}(x)=f_{{\cal A}_2}(x)$ for all $x\in\Sigma^{*}$. For this problem, we recall a result from \cite{LQ08}.

\begin{lemma} \label{lemm:BLM}
There exists a polynomial-time algorithm {\em (}running in
time $O((n_1+n_2)^4)${\em )}  that takes  two BLMs ${\cal A}_1$
and ${\cal A}_2$ as inputs and determines whether ${\cal A}_1$ and ${\cal
A}_2$ are equivalent, where $n_1$ and $n_2$ are the  numbers of states of ${\cal A}_1$
and ${\cal A}_2$, respectively.
\end{lemma}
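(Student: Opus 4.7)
The plan is to reduce equivalence of two BLMs to identical-zero testing of a single BLM, and then to settle that zero-testing problem by an iterative linear-algebra procedure in the spirit of Tzeng's algorithm for weighted automata.

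First I would form the \emph{difference} BLM ${\cal A}$ on $n=n_1+n_2$ states via
\[
\pi = (\pi_1,\ -\pi_2),\qquad M(\sigma) = \begin{pmatrix} M_1(\sigma) & 0 \\ 0 & M_2(\sigma) \end{pmatrix},\qquad \eta = \begin{pmatrix} \eta_1 \\ \eta_2 \end{pmatrix}.
\]
A direct computation gives $f_{\cal A}(x) = f_{{\cal A}_1}(x) - f_{{\cal A}_2}(x)$, so ${\cal A}_1$ and ${\cal A}_2$ are equivalent iff $f_{\cal A}(x)=0$ for every $x\in\Sigma^*$. This collapses the two-machine problem to a single word-function-vanishing problem on $n$ states.

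Next I would consider the row-vector subspace
\[
V = \mathrm{span}\{\pi M(x_1)\cdots M(x_k)\ :\ k\ge 0,\ x_1\cdots x_k\in\Sigma^*\}\subseteq\mathbb{C}^{1\times n},
\]
and note that $f_{\cal A}\equiv 0$ iff $v\eta = 0$ for every $v$ in some (hence any) basis of $V$. I would compute such a basis by a breadth-first search: initialise $B=\{\pi\}$ and a queue containing $\pi$; while the queue is nonempty, pop a vector $v$ and, for each $\sigma\in\Sigma$, test by incremental Gaussian elimination against the current $B$ whether $vM(\sigma)$ is linearly independent of $B$, and if so append it to both $B$ and the queue. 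Since $\dim V\le n$, at most $n$ vectors ever enter $B$, and once the queue is empty the closure property $B\cdot M(\sigma)\subseteq\mathrm{span}(B)$ forces $\pi M(x)\in\mathrm{span}(B)$ for every $x$ by induction on $|x|$, so $\mathrm{span}(B)=V$. Finally one checks whether $v\eta = 0$ for every $v\in B$ and answers ``equivalent'' precisely when all these inner products vanish.

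For the running time, $|B|\le n$, so at most $n|\Sigma|$ candidate vectors are ever examined; each examination costs an $O(n^2)$ vector-matrix product plus an $O(n^2)$ independence test against the maintained echelon form of $B$, giving $O(n^4)$ arithmetic operations overall (absorbing $|\Sigma|$ into the bound in the stated regime) and matching the claimed $O((n_1+n_2)^4)$ bound. The main obstacle is not any single step but the coordination of the two invariants that the BFS must maintain simultaneously, namely that $B$ stays linearly independent throughout and that once the queue empties every reachable $\pi M(x)$ lies in $\mathrm{span}(B)$; together these are what turn the linear-algebraic observation about $V$ into a genuine polynomial-time decision procedure, while the reduction in the first step and the Gaussian-elimination bookkeeping are routine.
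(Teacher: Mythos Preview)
The paper does not actually prove this lemma; it is quoted verbatim as a known result from \cite{LQ08} and used as a black box. Your proposal, by contrast, supplies a full argument, and that argument is correct: the direct-sum difference machine reduces two-machine equivalence to zero testing of a single BLM, and the breadth-first generation of a basis for the reachable row space $V=\mathrm{span}\{\pi M(x):x\in\Sigma^*\}$ followed by the orthogonality check against $\eta$ is exactly Tzeng's algorithm for weighted/bilinear automata, which is also the method underlying \cite{LQ08}. So in spirit you are reproducing the cited proof rather than diverging from it.

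One minor point worth tightening: your cost accounting gives $O(n^3|\Sigma|)$ arithmetic operations (at most $n$ basis vectors, each spawning $|\Sigma|$ candidates, each costing $O(n^2)$), and you wave $|\Sigma|$ into the $O(n^4)$ bound ``in the stated regime.'' The lemma as stated suppresses $|\Sigma|$, so this is fine if the alphabet is treated as fixed, but it would be cleaner to say so explicitly rather than to absorb it silently.
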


Using this lemma we will obtain the following result.
\begin{theorem}
 It is decidable whether  two pvDFA are comparable.
\end{theorem}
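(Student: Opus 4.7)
The plan is to reduce comparability to a finite number of inclusion tests between regular languages, each of which is classically decidable (and in fact falls within the scope of Lemma~\ref{lemm:BLM} after a standard encoding).

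First, given a pvDFA $\mathcal{A}=(S,\Sigma,\delta,s_{0},S_{a},S_{r})$, I would extract two ordinary DFAs from it: $\mathcal{A}_{y}=(S,\Sigma,\delta,s_{0},S_{a})$ and $\mathcal{A}_{n}=(S,\Sigma,\delta,s_{0},S_{r})$. Directly from the definition of $\mathcal{P}(\mathcal{A})$, these satisfy $\mathcal{L}(\mathcal{A}_{y})=\mathcal{P}_{yes}(\mathcal{A})$ and $\mathcal{L}(\mathcal{A}_{n})=\mathcal{P}_{no}(\mathcal{A})$; this is also consistent with the forward direction of Theorem~\ref{Th-rec}. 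The same extraction applied to the second pvDFA $\mathcal{B}$ yields DFAs $\mathcal{B}_{y}$ and $\mathcal{B}_{n}$ for $\mathcal{P}_{yes}(\mathcal{B})$ and $\mathcal{P}_{no}(\mathcal{B})$.

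Next I would unpack the order relation. By definition, $\mathcal{A}\leq\mathcal{B}$ is equivalent to the conjunction
\begin{equation}
\mathcal{L}(\mathcal{A}_{y})\subseteq\mathcal{L}(\mathcal{B}_{y})\quad\text{and}\quad \mathcal{L}(\mathcal{A}_{n})\subseteq\mathcal{L}(\mathcal{B}_{n}),
\end{equation}
and $\mathcal{B}\leq\mathcal{A}$ is the symmetric pair of inclusions. Hence $\mathcal{A}$ and $\mathcal{B}$ are comparable (i.e.\ $\mathcal{A}=\mathcal{B}$, $\mathcal{A}<\mathcal{B}$, or $\mathcal{A}>\mathcal{B}$) iff at least one of these two directions holds. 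So the decision problem collapses to four inclusion questions between explicitly given regular languages.

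Finally, each inclusion $\mathcal{L}(\mathcal{C}_{1})\subseteq\mathcal{L}(\mathcal{C}_{2})$ between DFAs is decidable: form the product DFA recognizing $\mathcal{L}(\mathcal{C}_{1})\cap\overline{\mathcal{L}(\mathcal{C}_{2})}$ (complementing a DFA is trivial by swapping its accepting set) and test emptiness by reachability of an accepting state from the initial state. Alternatively, one may phrase this in the BLM framework of Lemma~\ref{lemm:BLM}: view each DFA as a BLM whose word function is the characteristic function of its language, and observe that $L_{1}\subseteq L_{2}$ iff $L_{1}\cup L_{2}=L_{2}$, so inclusion reduces to equivalence of two BLMs, which by Lemma~\ref{lemm:BLM} can be decided in time polynomial in the sum of their state sizes. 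Running this procedure on the four required inclusions yields a decision procedure for comparability.

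There is no real obstacle here; the only point deserving care is confirming that a pvDFA genuinely gives rise to DFAs for each of its two components just by selecting $S_{a}$ or $S_{r}$ as the accepting set (the unused states being harmless), and that every word ``falls somewhere'', so that the four DFA-inclusion tests correctly capture the componentwise containment demanded by the definition of $\leq$ on $\mathcal{P}(\mathcal{A})$ and $\mathcal{P}(\mathcal{B})$.
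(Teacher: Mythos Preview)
Your proposal is correct and its core idea---extract from each pvDFA the two component DFAs $\mathcal{A}_y,\mathcal{A}_n$ and reduce the order relation to regular-language inclusions---is exactly what the paper does for the inequality part. The one difference is organizational: the paper first treats the equality case $\mathcal{A}=\mathcal{B}$ separately, by encoding each pvDFA as a BLM with a \emph{three-valued} word function (output $1$ on $\mathcal{P}_{yes}$, $2$ on $\mathcal{P}_{no}$, $0$ elsewhere) and invoking Lemma~\ref{lemm:BLM} directly on that encoding, and only afterwards handles ${\cal A}<{\cal B}$ via the same DFA-inclusion argument you give. Your route is slightly more economical, since $\mathcal{A}=\mathcal{B}$ is already subsumed by checking both $\mathcal{A}\leq\mathcal{B}$ and $\mathcal{B}\leq\mathcal{A}$, so the separate three-valued BLM construction is not needed; what the paper's version buys is an explicit polynomial bound $O((n_1+n_2)^4)$ for the equality test coming straight out of Lemma~\ref{lemm:BLM}. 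Either way the reduction to decidable inclusion of regular languages is the decisive step, and you have that.
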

\begin{proof}
Given two pvDFA ${\cal A}$ and ${\cal B}$, it is sufficient to prove that it is decidable     whether ${\cal A}={\cal B}$, and  whether  ${\cal A}<{\cal B}$.

At first we prove that it is decidable whether  ${\cal A}={\cal B}$. Indeed,
suppose that  a pvDFA ${\cal C}=(S, \Sigma, \delta, s_0, S_{a}, S_r)$ recognizes a promise problem $C=(C_{yes}, C_{no})$. We construct now a  BLM: ${\cal C}'=(S, \pi, \{M(\sigma)\}_{\sigma\in\Sigma},  \eta)$, where $\pi$ is an $|S|$-dimensional row vector with $\pi[s_0]=1$ and $\pi[s]=1$ for $s\neq s_0$, $M(\sigma)$ is an $|S|\times |S|$ matrix with $M(\sigma)[s,t]=1$ if $\delta(s, \sigma) = t$ and $0$ otherwise, and  $\eta$ is an $|S|$-dimensional column vector such that
\begin{align*}
\eta[s]=\left\{
          \begin{array}{ll}
            1, & \hbox{if $s\in S_a$;} \\
            2, & \hbox{if $s\in S_r$;} \\
            0, & \hbox{otherwise.}
          \end{array}
        \right.
\end{align*}
To such a  BLM ${\cal C}'$ we can associate  a function $f_{{\cal C}'}: \Sigma^*\rightarrow \{0,1,2\}$
defined as follows: $f_{{\cal C}'}(x)=1$ iff $x\in C_{yes}$, $f_{{\cal C}'}(x)=2$ iff $x\in C_{no}$, and $f_{{\cal C}'}(x)=0$ iff $x\in \Sigma^*\setminus (C_{yes}\cup C_{no})$.

Therefore, two  pvDFA ${\cal A}$ and ${\cal B}$ are equivalent  iff their associated BLMs ${\cal A}'$ and ${\cal B}'$ are equivalent, i.e., $f_{{\cal A}'}(x)=f_{{\cal B}'}(x)$ for all $x\in\Sigma^*$. The latter problem is decidable by Lemma \ref{lemm:BLM}.

As the next we show that it is decidable whether  ${\cal A}<{\cal B}$. Suppose that  a pvDFA ${\cal C}=(S, \Sigma, \delta, s_0, S_{a}, S_r)$ is such that  $ {\cal P}({\cal C})=(C_{yes}, C_{no})$. Let us now consider  DFA ${\cal C}_y=(S, \Sigma, \delta, s_0, S_{a})$  and ${\cal C}_n=(S, \Sigma, \delta, s_0, S_{r})$.  Clearly ${\cal L}({\cal C}_y)=C_{yes}$ and  ${\cal L}({\cal C}_n)=C_{no}$.  These observations can now be used as follows.
Given two pvDFA ${\cal A}$ and ${\cal B}$, we have ${\cal A}<{\cal B}$ iff ${\cal L}({\cal A}_y)\subseteq {\cal L}({\cal B}_y)$ and ${\cal L}({\cal A}_n)\subseteq {\cal L}({\cal B}_n)$.

 It is clear that ${\cal L}({\cal A})\subseteq {\cal L}({\cal B})$ is equivalent to ${\cal L}({\cal A})\cap {\cal L}({\cal B})={\cal L}({\cal A})$.
The later problem is decidable, since it is easy to construct a DFA ${\cal C}$ recognizing ${\cal L}({\cal A})\cap {\cal L}({\cal B})$ and  the equivalence between  DFA  ${\cal C}$ and ${\cal A}$  is decidable.
Therefore, given two DFA ${\cal A}$ and ${\cal B}$, it is decidable whether ${\cal L}({\cal A})\subseteq {\cal L}({\cal B})$.
\end{proof}

\begin{remark}
Note that for any given pvDFA, there exist algorithms to find an equivalent pvDFA which has the smallest number of states among all  pvDFA equivalent to the given one, since a pvDFA can be  considered  as a  special Moore automaton  whose minimization problem is known to be solvable,  see \cite{BH14} for more details.
\end{remark}

\begin{remark}
   If one of the following cases  ${\cal A}={\cal B}$, ${\cal A}<{\cal B}$ or ${\cal A}>{\cal B}$ holds, then we know that two  pvDFA are comparable.  Otherwise, they are not comparable. Suppose that pvDFA ${\cal A}$ has $n_1$ states and pvDFA ${\cal B}$ has $n_2$ states, it takes polynomial time  ($O((n_1+n_2)^4)$) to determine whether  ${\cal A}={\cal B}$. Given two DFA ${\cal C}$  and DFA ${\cal D}$, it takes also polynomial time to find out ${\cal L}({\cal C})\cap {\cal L}({\cal D})$. According to the above theorem, therefore, it takes polynomial time to determine whether two pvDFA are comparable or not.
\end{remark}

\section{State complexity}

Consideration of state complexity is another way to get a deepen insight in to the power of various types of automata \cite{Yu98}.
 In this section we will deal with  the state complexity of pvDFA for promise problems with respect to recognizability and solvability.

For a regular language $L$, we denote by $s(L)$ the number of states of the  minimal DFA to recognize the language $L$. For a promise problem $A=(A_{yes},A_{no})$ that can be recognized by a pvDFA, we denote by $sr(A)$ the number of states of the minimal pvDFA recognizing $A$. For a promise problem $A=(A_{yes},A_{no})$ that can be solved by a pvDFA, we denote by $ss(A)$ the number of states of the minimal pvDFA solving $A$.

In a DFA  ${\cal A}=(S,\Sigma,\delta,s_0, S_a)$, a
state $s$ is said to be {\em distinguishable} from a state $t$ if there is $w\in\Sigma^*$ such that one of the states $\widehat{\delta}(s,w)$ and  $\widehat{\delta}(t,w)$ is accepting, and the other is not. If every two states in DFA ${\cal A}$ are  distinguishable from each other, then ${\cal A}$ is minimal \cite{Hop}.

\begin{theorem}\label{th9}
If a promise problem $A=(A_{yes},A_{no})$ with $A_{yes}\neq \emptyset$ and $A_{no}\neq \emptyset$  can be recognized by a pvDFA, then
\begin{equation}\label{Eq-com}
  \max\{s(A_{yes}),s(A_{no})\}\leq sr(A)\leq s(A_{yes})s(A_{no})-1.
\end{equation}
\end{theorem}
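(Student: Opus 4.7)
The plan is to establish the two bounds separately, with both obtained by direct constructions.

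For the lower bound, I would start from a minimal pvDFA $\mathcal{A}=(S,\Sigma,\delta,s_0,S_a,S_r)$ recognizing $A$, with $|S|=sr(A)$. The key observation is that the recognizability condition $w\in A_{yes}\Leftrightarrow\widehat{\delta}(s_0,w)\in S_a$ is exactly the acceptance condition of the DFA $\mathcal{A}^1=(S,\Sigma,\delta,s_0,S_a)$. Hence $\mathcal{A}^1$ recognizes the regular language $A_{yes}$, so $s(A_{yes})\le|S|=sr(A)$. Symmetrically, the DFA $\mathcal{A}^2=(S,\Sigma,\delta,s_0,S_r)$ recognizes $A_{no}$, yielding $s(A_{no})\le sr(A)$. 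Taking the maximum gives the left inequality in \eqref{Eq-com}.

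For the upper bound, I would reuse the product construction from the $(\Leftarrow)$ direction of Theorem \ref{Th-rec}. Take the minimal DFAs $\mathcal{A}^1=(S^1,\Sigma,\delta^1,s_0^1,S_a^1)$ for $A_{yes}$ and $\mathcal{A}^2=(S^2,\Sigma,\delta^2,s_0^2,S_a^2)$ for $A_{no}$, so $|S^i|=s(A_{yes})$ and $|S^2|=s(A_{no})$. The construction produces a pvDFA on the state set $(S^1\times S^2)\setminus(S_a^1\times S_a^2)$ recognizing $A$ (that proof already verified correctness and that $S_a^1\times S_a^2$ is unreachable, so discarding it is safe). Therefore
\begin{equation*}
sr(A)\le|S^1||S^2|-|S_a^1||S_a^2|=s(A_{yes})\,s(A_{no})-|S_a^1||S_a^2|.
\end{equation*}
Now the nonemptiness hypotheses kick in: since $A_{yes}\ne\emptyset$, the minimal DFA $\mathcal{A}^1$ must contain at least one accepting state, i.e.\ $|S_a^1|\ge1$, and likewise $|S_a^2|\ge1$. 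Thus $|S_a^1||S_a^2|\ge1$, which yields the desired bound $sr(A)\le s(A_{yes})\,s(A_{no})-1$.

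Neither direction poses a real obstacle. The only place to be careful is the lower bound argument, where one must notice that the pvDFA recognition condition is biconditional (``iff''), so strings in the unpromised part $\Sigma^*\setminus(A_{yes}\cup A_{no})$ automatically land outside $S_a$ and outside $S_r$, ensuring $\mathcal{A}^1$ and $\mathcal{A}^2$ actually recognize $A_{yes}$ and $A_{no}$ in the strict language sense. In the upper bound, the subtlety worth noting is that the ``$-1$'' in the bound is precisely where the nonemptiness hypotheses on $A_{yes}$ and $A_{no}$ are used; without them one only gets $sr(A)\le s(A_{yes})s(A_{no})$.
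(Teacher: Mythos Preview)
Your proposal is correct and follows essentially the same approach as the paper's own proof: both extract DFAs for $A_{yes}$ and $A_{no}$ from a minimal recognizing pvDFA for the lower bound, and both reuse the product construction from Theorem~\ref{Th-rec} for the upper bound. Your explicit remark that the nonemptiness hypotheses are exactly what force $|S_a^1|\,|S_a^2|\ge 1$ is a nice clarification that the paper leaves implicit.
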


\begin{proof}
Since  $A$ can be recognized by a pvDFA,   according to Theorem \ref{Th-rec},  $A_{yes}$ and $A_{no}$ are regular languages.

Suppose that  $A$ is recognized by a minimal pvDFA ${\cal A}=(S,\Sigma,\delta,s_0, S_a,S_r)$, we have that the regular language $A_{yes}$ can be recognized by the DFA ${\cal A}_y=(S,\Sigma,\delta,s_0, S_a)$ and the regular language $A_{no}$ can be recognized by the DFA ${\cal A}_n=(S,\Sigma,\delta,s_0, S_r)$. Therefore, $|S|\geq s(A_{yes})$ and   $|S|\geq s(A_{no})$. Hence $sr(A)= |S|\geq \max\{s(A_{yes}),s(A_{no})\}$.

Let us assume that  $A_{yes}$  is recognized by a minimal DFA ${\cal A}^1=(S^1,\Sigma,\delta^1,s_{0}^1, S_{a}^1)$ and $A_{no}$  is recognized by a minimal DFA ${\cal A}^2=(S^2,\Sigma,\delta^2,s_{0}^2, S_{a}^2)$.
 According to  Theorem \ref{Th-rec},
 the promise problem can be recognized by the  pvDFA ${\cal A}=(S,\Sigma,\delta,s_{0}, S_{a},S_r)$ where $S=(S_1\times S_2)\setminus (S_{a}^1\times S_{a}^2)$, $s_{0}=\langle s_{0}^1,  s_{0}^2\rangle$, $\delta(\langle s^1,s^2\rangle, \sigma)=\langle \delta^1(s^1,\sigma),  \delta^2(s^2,\sigma)\rangle$, $S_a=S_{a}^1\times (S^2\setminus S_{a}^2 )$ and $S_r=(S^1\setminus S_{a}^1 ) \times  S_{a}^2$. Therefore,  we have $sr(A)\leq |S|- |S_a^1\times S_a^2|\leq S_1\times S_2-1=s(A_{yes})s(A_{no})-1$.

\end{proof}

A natural problem is whether Inequalities (\ref{Eq-com}) are tight. Next we try to answer them partially. First, we consider the left side.

\begin{theorem}
 The left side of Inequalities (\ref{Eq-com}) is tight.
\end{theorem}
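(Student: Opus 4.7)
The plan is to exhibit an explicit family of promise problems that achieves the lower bound $sr(A)=\max\{s(A_{yes}),s(A_{no})\}$, thereby showing the left side of (\ref{Eq-com}) is tight. A unary alphabet together with modular arithmetic gives the cleanest example.

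Concretely, for any fixed integer $n \geq 2$, I would take $\Sigma=\{a\}$ and define
\begin{equation}
A_{yes}=\{a^{kn}\mid k\geq 0\},\qquad A_{no}=\{a^{kn+1}\mid k\geq 0\}.
\end{equation}
The two components are disjoint, so $A=(A_{yes},A_{no})$ is a legitimate promise problem with both parts nonempty. First I would verify that $s(A_{yes})=s(A_{no})=n$: both are unary languages determined by a single residue class modulo $n$, and the standard Myhill--Nerode/distinguishability argument shows that the strings $\varepsilon,a,a^2,\dots,a^{n-1}$ are pairwise distinguishable for each component, while the natural $n$-state cyclic DFA recognizes each language. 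Hence $\max\{s(A_{yes}),s(A_{no})\}=n$.

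Next I would construct a pvDFA with exactly $n$ states that recognizes $A$. Let ${\cal A}=(S,\Sigma,\delta,s_0,S_a,S_r)$ with $S=\{s_0,s_1,\ldots,s_{n-1}\}$, transition $\delta(s_i,a)=s_{(i+1)\bmod n}$, accepting set $S_a=\{s_0\}$, and rejecting set $S_r=\{s_1\}$. A direct induction on $|w|$ shows $\widehat{\delta}(s_0,a^m)=s_{m\bmod n}$, so $w\in A_{yes}\iff \widehat{\delta}(s_0,w)\in S_a$ and $w\in A_{no}\iff \widehat{\delta}(s_0,w)\in S_r$. Therefore $sr(A)\leq n$. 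Combined with the lower bound $sr(A)\geq \max\{s(A_{yes}),s(A_{no})\}=n$ from Theorem \ref{th9}, this gives $sr(A)=n$, proving tightness.

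There is no real obstacle here: the unary mod-$n$ lower bounds are classical, and the cyclic pvDFA is minimal by the same residue-class argument (states $s_0,\dots,s_{n-1}$ are pairwise distinguishable since $s_i$ reaches $S_a$ on $a^{n-i}$ while $s_j$, $j\neq i$, does not). The only thing worth care is making sure the example is nontrivial in the sense required by the statement ($A_{yes}$ and $A_{no}$ both nonempty and the equality genuinely attained), which the above construction clearly satisfies for every $n\geq 2$.
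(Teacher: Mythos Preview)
Your proposal is correct and uses essentially the same example as the paper: a unary promise problem based on residues modulo $n$, recognized by the cyclic $n$-state pvDFA with $S_a=\{s_0\}$ and $S_r=\{s_1\}$.

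There is one notable difference in how the lower bound $sr(A)\geq n$ is justified. You simply invoke Theorem~\ref{th9} together with $s(A_{yes})=s(A_{no})=n$, which works for every $n\geq 2$. The paper instead argues by contradiction: if a pvDFA with $M<N$ states recognized $A^{N,l}$, then the underlying DFA $(S',\{a\},\delta',s_0',S_a')$ would \emph{solve} $A^{N,l}$ with fewer than $N$ states, contradicting a result from \cite{GQZ14b} that the minimal DFA solving $A^{N,l}$ has $N$ states---a result that requires $N$ to be prime. Your route is more self-contained (no external citation needed) and slightly more general (no primality restriction), while the paper's detour through solvability ties the example to the state-complexity results for the solving mode developed elsewhere. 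Either way the conclusion is the same, and your argument is entirely adequate.
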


\begin{proof}
We prove that $sr(A)= \max\{s(A_{yes}),s(A_{no})\}$ in some cases. Let us consider the promise problem $A^{N,\,l}=(A_{yes}^{N,\,l},A_{no}^{N,\,l})$ with  $A_{yes}^{N,\,l}=\{a^{iN}\,|\,\ i\geq 0\}$ and $A_{no}^{N,\,l}=\{a^{iN+l}\,|\,\ i\geq 0\}$, where $N$ is a fix prime and $l$ is a fix positive integer such that $0< l<N$. It is easy to see that $s(A_{yes}^{N,\,l})=N$ and $s(A_{no}^{N,\,l})=N$.
Let us consider now an $N$-state pvDFA ${\cal B}=(S,\{a\},\delta,s_0, S_a,S_r)$, where $S=\{s_0,s_1,\ldots,s_{N-1}\}$, $S_a=\{s_0\}$, $S_r=\{s_l\}$ and $\delta(s_i,a)=s_{(i+1)\ {\it mod}\  N}$. It is easy to check that the promise problem $A^{N,\,l}$ can be recognized by the pvDFA ${\cal B}$.

Let us assume now that the promise problem $A^{N,\,l}$ can be recognized by an $M$-state pvDFA ${\cal B}'=(S',\{a\},\delta',s'_0, S'_a,S'_r)$ and $M<N$. It is easy to see that
the DFA ${\cal B}'_1=(S',\{a\},\delta',s'_0, S'_a)$ can solve the promise problem $A^{N,\,l}$. Therefore, the minimal DFA to solve the promise problem  $A^{N,\,l}$ has less  than $N$ states,  contradicting   the fact that
 the minimal DFA to solve  $A^{N,\,l}$ has $N$ states \cite{GQZ14b}.

Therefore,  $sr(A^{N,\,l})=\max\{s(A_{yes}^{N,\,l}),s(A_{no}^{N,\,l})\}=N$.
\end{proof}

For the right side, we only know the following relation.

\begin{theorem}\label{Bound}
 There is a promise problem $A$ satisfying  $sr(A)=\frac{1}{2}s(A_{yes})s(A_{no})$.
\end{theorem}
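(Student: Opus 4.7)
The plan is to exhibit one simple, explicit promise problem $A$ attaining the equality $sr(A) = \tfrac{1}{2} s(A_{yes}) s(A_{no})$; since the statement is purely existential, a minimal example suffices. I propose the unary parity promise problem over $\Sigma = \{a\}$ with $A_{yes} = (aa)^* = \{a^{2k} : k \geq 0\}$ and $A_{no} = a(aa)^* = \{a^{2k+1} : k \geq 0\}$, which are clearly disjoint and regular, so by Theorem~\ref{Th-rec} the pair is recognizable by some pvDFA.

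First I would compute $s(A_{yes}) = s(A_{no}) = 2$. For each language the two-state parity DFA (tracking the number of $a$'s modulo $2$) is sufficient, and minimality follows from the Myhill--Nerode criterion: the empty string and $a$ are distinguishable without reading any further symbol (for $A_{yes}$ the former is a member and the latter is not; symmetrically for $A_{no}$), so at least two equivalence classes are required.

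Next I would construct a two-state pvDFA $\mathcal{A} = (\{q_0,q_1\}, \{a\}, \delta, q_0, \{q_0\}, \{q_1\})$ with $\delta(q_i,a)=q_{1-i}$. A one-line induction shows that $\widehat{\delta}(q_0, a^n) = q_{n \bmod 2}$, so the final state lies in $S_a$ iff $n$ is even iff $a^n \in A_{yes}$, and lies in $S_r$ iff $n$ is odd iff $a^n \in A_{no}$. Hence $sr(A) \leq 2$. The matching lower bound is already available: the left-hand inequality of Theorem~\ref{th9} yields $sr(A) \geq \max\{s(A_{yes}), s(A_{no})\} = 2$.

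Combining the two bounds gives $sr(A) = 2 = \tfrac{1}{2}\cdot 2 \cdot 2 = \tfrac{1}{2}\, s(A_{yes})\, s(A_{no})$, which proves the theorem. There is essentially no obstacle: the example is chosen to be minimal, and every step reduces to a routine verification. The genuinely hard question left implicit by the juxtaposition of this theorem with Theorem~\ref{th9} is whether the upper bound $s(A_{yes})s(A_{no}) - 1$ is ever attained, or whether instances with ratio strictly between $1/2$ and $1$ can be exhibited; establishing the ratio $1/2$ by itself, however, only requires the two-state example above.
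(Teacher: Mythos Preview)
Your proof is correct: the parity promise problem over $\{a\}$ with $A_{yes}=(aa)^*$ and $A_{no}=a(aa)^*$ indeed has $s(A_{yes})=s(A_{no})=2$ and $sr(A)=2$, so the equality $sr(A)=\tfrac{1}{2}s(A_{yes})s(A_{no})$ holds. Every step you give (Myhill--Nerode for the two lower bounds, the explicit two-state pvDFA for the upper bound, and the appeal to the left inequality of Theorem~\ref{th9}) is sound.

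The paper, however, takes a substantially different and more laborious route. It constructs an \emph{infinite family}: for any pair $p,q>2$ with $\gcd(p,q)=2$ it sets $A_{yes}=(a^p)^*$ and $A_{no}=(a^q)^*a$, shows $s(A_{yes})=p$, $s(A_{no})=q$, and then proves $sr(A)=\tfrac{1}{2}pq$ by arguing that the natural $\tfrac{1}{2}pq$-state product pvDFA is already minimal (via a case analysis showing all states of the associated DFA for $A_{yes}\cup A_{no}$ are pairwise distinguishable). What this buys is evidence that the ratio $\tfrac{1}{2}$ is not an accident of small numbers: it is attained for arbitrarily large $s(A_{yes})$ and $s(A_{no})$, which is the relevant information when one is probing how tight the upper bound $s(A_{yes})s(A_{no})-1$ of Theorem~\ref{th9} really is. Your example, by contrast, is the degenerate case $p=q=2$ where the promise covers all of $\Sigma^*$ and the pvDFA is in fact a DFA; it settles the literal existential claim with no effort but says nothing about whether the phenomenon scales. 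You already anticipated this distinction in your final paragraph, and for the theorem \emph{as stated} your argument is a legitimate and much shorter proof.
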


\begin{proof}

In the interest of readability, we put the proof in Appendix.

\end{proof}

\begin{theorem}
If a promise problem $A=(A_{yes},A_{no})$ can be recognized by a pvDFA, then $ss(A)\leq \min\{s(A_{yes}),\linebreak[0]s(A_{no})\}$.
\end{theorem}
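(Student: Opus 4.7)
The plan is to exhibit, for each of $A_{yes}$ and $A_{no}$, a pvDFA that solves the promise problem $A$ and has exactly $s(A_{yes})$ or $s(A_{no})$ states respectively. Taking the smaller of the two bounds will then yield $ss(A)\leq \min\{s(A_{yes}),s(A_{no})\}$.

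First, since $A$ can be recognized by a pvDFA, Theorem \ref{Th-rec} tells us that both $A_{yes}$ and $A_{no}$ are regular. Let ${\cal A}^1=(S^1,\Sigma,\delta^1,s_0^1,S_a^1)$ be the minimal DFA recognizing $A_{yes}$, so $|S^1|=s(A_{yes})$. I would then define the pvDFA
\begin{equation}
{\cal B}^1=(S^1,\Sigma,\delta^1,s_0^1,S_a^1,S^1\setminus S_a^1),
\end{equation}
obtained by promoting every non-accepting state of ${\cal A}^1$ into a rejecting state. To verify that ${\cal B}^1$ solves $A$, take any $w\in A_{yes}\cup A_{no}$. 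If $w\in A_{yes}$, then ${\cal A}^1$ accepts, so $\widehat{\delta^1}(s_0^1,w)\in S_a^1$, which is exactly the accepting condition for ${\cal B}^1$. If $w\in A_{no}$, then since $A_{yes}\cap A_{no}=\emptyset$ we have $w\notin A_{yes}$, so $\widehat{\delta^1}(s_0^1,w)\in S^1\setminus S_a^1$, i.e., ${\cal B}^1$ rejects. Hence ${\cal B}^1$ solves $A$ and therefore $ss(A)\leq |S^1|=s(A_{yes})$.

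Symmetrically, let ${\cal A}^2=(S^2,\Sigma,\delta^2,s_0^2,S_a^2)$ be the minimal DFA recognizing $A_{no}$. This time I would define
\begin{equation}
{\cal B}^2=(S^2,\Sigma,\delta^2,s_0^2,S^2\setminus S_a^2,S_a^2),
\end{equation}
i.e., swapping the roles of accepting and rejecting: the old accepting states of ${\cal A}^2$ become the rejecting states of ${\cal B}^2$, and all other states become accepting. The verification is the same up to a swap of $yes/no$: for $w\in A_{no}$, ${\cal A}^2$ accepts so ${\cal B}^2$ rejects; for $w\in A_{yes}$, we have $w\notin A_{no}$, so ${\cal A}^2$ rejects, and thus ${\cal B}^2$ accepts. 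This yields $ss(A)\leq s(A_{no})$, and combining the two bounds gives the claim.

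There is essentially no obstacle here: the proof is a direct construction once regularity of the two components is known. The only subtlety worth pointing out is that solvability only constrains behaviour on $A_{yes}\cup A_{no}$, which is exactly what allows us to reuse a DFA for just one component as a pvDFA for the whole promise problem; this freedom is precisely what gives $ss(A)$ its significantly smaller upper bound compared to the $sr(A)\leq s(A_{yes})s(A_{no})-1$ bound of Theorem \ref{th9}.
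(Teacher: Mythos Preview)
Your proof is correct and follows essentially the same approach as the paper's: use regularity of the components (via Theorem~\ref{Th-rec}), then take the minimal DFA for each component as a solver for the promise problem. Your write-up is in fact slightly more careful than the paper's on the $A_{no}$ side, since you make explicit the swap of accepting and rejecting states needed for ${\cal B}^2$ to solve $A$; the paper simply asserts that the DFA for $A_{no}$ ``solves'' $A$ without spelling out this swap.
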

\begin{proof}
%Suppose the promise problem $A$ can be recognized by pvDFA ${\cal A}=(S,\Sigma,\delta,s_0, S_a,S_r)$.

 According to Theorem \ref{Th-rec},   $A_{yes}$ and $A_{no}$ are regular languages. Suppose  $A_{yes}$ can be recognized by a minimal DFA ${\cal A}^1=(S^1,\Sigma^1,\delta^1,s_0^1, S_a^1)$. This implies  that the promise problem $A$ can be solved by the DFA ${\cal A}^1$ and therefore $ss(A)\leq s(A_{yes})$.
  Suppose  $A_{no}$ can be recognized by a minimal DFA ${\cal A}^2=(S^2,\Sigma^2,\delta^2,s_0^2, S_a^2)$. We get that the promise problem $A$ can be solved by the DFA  ${\cal A}^2$ and therefore $ss(A)\leq s(A_{no})$.
  Hence $ss(A)\leq \min\{s(A_{yes}),s(A_{no})\}$.

We prove that  $ss(A)= \min\{s(A_{yes}),s(A_{no})\}$ in same cases.  Let us consider the promise problem $A^{N,\,l}=(A_{yes}^{N,\,l},A_{no}^{N,\,l})$ with  $A_{yes}^{N,\,l}=\{a^{iN}\,|\,\ i\geq 0\}$ and $A_{no}^{N,\,l}=\{a^{iN+l}\,|\,\ i\geq 0\}$, where $N$ is a fix prime and $l$ is a positive integer such that $0< l<N$. It is easy to see that $s(A_{yes}^{N,\,l})=N$ and $s(A_{no}^{N,\,l})=N$. It has been proved in \cite {GQZ14b} that  $ss(A^{N,\,l})=N$. Therefore  $ss(A^{N,\,l})=\min\{s(A_{yes}^{N,\,l}), s(A_{no}^{N,\,l})\}=N$.
\end{proof}

\begin{remark}
 $ss(A)$ can be very small with respect to $s(A_{yes})$ and $s(A_{no})$. For example, let us  consider the promise problem $A^{N,\,l}=(A_{yes}^{N,\,l},A_{no}^{N,\,l})$ with  $A_{yes}^{N,\,l}=\{a^{iN}\,|\,\ i\geq 0\}$ and $A_{no}^{N,\,l}=\{a^{iN+l}\,|\,\ i\geq 0\}$, where $N$ is a fix even integer and $l$ is fix  odd integer such that $0< l<N$. Obviously, we have $s(A_{yes}^{N,\,l})=N$ and $s(A_{no}^{N,\,l})=N$.  However $ss(A^{N,\,l})=2$, since the length of the input $|w|$ is even if $w\in A_{yes}^{N,\,l}$ and the length of the input $|w|$ is odd if $w\in A_{no}^{N,\,l}$.
\end{remark}

\section{One-way quantum finite automata  for promise problems}\label{s-5}

It has been proved that two-way quantum finite automata (2QFA) \cite{Kon97} and also 2QCFAs \cite{Amb02} are more powerful than two-way probabilistic finite automata (2PFA) in recognizing languages.  2QCFA are also more powerful than 2PFA in solving promise problems \cite{RY14}.
In the case of one-wayness, it has been proved that one-way quantum finite automata (1QFA) are not more powerful than one-way  classical finite automata (1FA) \cite{Amb98,Hir10,LiQiu09} in recognizing languages. However, we will prove that 1QFA can be more powerful than their classical counterparts when  recognizing promise problems.

 We  prove now that the exact 1QFA have advantages in recognizing promise problems comparing to their classical counterparts (DFA). Some of the proof techniques can be found in \cite{GQZ14b}.

Let us   consider  a  family of  promise problems $A^{l}=(A_{yes}^{l},A_{no}^{l})$ with $A_{yes}^{l}=\{w\in\{a,b\}^*\mid \#_a(w)= \#_b(w)\}$ and $A_{no}^{l}=\{w\in\{a,b\}^*\mid \#_a(w)+l= \#_b(w)\}$, where $l$ is a fix positive integer such that $ (2\pi i+\frac{\pi}{2}) \leq \sqrt{2}l\leq (2\pi i+\frac{3\pi}{2})$ for some integer $i$.

\begin{theorem} \label{ThMOQ-DFA}
The promise problems  $A^{l}$ can be recognized exactly by a pvMO-1QFA and can not be recognized by any pvDFA.
\end{theorem}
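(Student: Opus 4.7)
My plan splits into the negative half (no pvDFA) and the positive half (exact pvMO-1QFA), which are very different in character.

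For the negative half, I would simply apply Theorem \ref{Th-rec}: a promise problem is recognised by a pvDFA iff both its components are regular languages. Now $A_{yes}^l \cap a^* b^* = \{a^n b^n : n \ge 0\}$ is the textbook non-regular language, and regular languages are closed under intersection with regular languages, so $A_{yes}^l$ itself cannot be regular. That immediately rules out a pvDFA.

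For the positive half, I would build a two-state MO-1QFA of ``rotation counting'' type. Take $Q = \{|0\rangle, |1\rangle\}$ with initial state $|0\rangle$, $Q_a = \{|0\rangle\}$, $Q_r = \{|1\rangle\}$, trivial end-markers, and let $U_a$ be a real rotation of angle $\sqrt{2}$ in the $|0\rangle,|1\rangle$--plane while $U_b = U_a^{-1}$. Because $U_a$ and $U_b$ commute, the final quantum state after reading any word $w$ depends only on the integer $m := \#_a(w) - \#_b(w)$ and equals $\cos(m\sqrt{2})|0\rangle + \sin(m\sqrt{2})|1\rangle$; the accepting and rejecting probabilities are therefore $\cos^2(m\sqrt{2})$ and $\sin^2(m\sqrt{2})$. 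For $w \in A_{yes}^l$ we have $m = 0$ and the accept probability is exactly $1$. For $w \in A_{no}^l$ we have $m = -l$, and the standing hypothesis $2\pi i + \pi/2 \le \sqrt{2}\,l \le 2\pi i + 3\pi/2$ places $l\sqrt{2}$ in the regime where $\cos \le 0$, i.e.\ where $|\sin(l\sqrt{2})|$ is the dominant component, delivering the reject side.

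For the ``iff'' clauses in the definition of recognition --- strings outside the promise must yield neither accept nor reject probability equal to $1$ --- I would invoke the irrationality of $\sqrt{2}/\pi$: among integers $m$, the equation $m\sqrt{2} \in \pi\mathbb{Z}$ forces $m = 0$, so $\cos^2(m\sqrt{2}) = 1$ only on $A_{yes}^l$, and a parallel transcendence argument pins $\sin^2(m\sqrt{2}) = 1$ to the desired difference.  The main technical obstacle is exactly this last step: one has to verify the ``iff'' uniformly over all integer values that the difference $m$ can realise, using the algebraic independence of $\sqrt{2}$ from rational multiples of $\pi$ to avoid spurious coincidences, and using the choice of $l$ to make sure the no-instance actually lands in the half-period where $|\sin|$ dominates. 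If a literal two-state construction only achieves bounded-error rather than exact recognition on the rejecting side, I would enlarge the Hilbert space by one auxiliary basis state to absorb any residual amplitude and recover exact recognition in the sense of Section~2.
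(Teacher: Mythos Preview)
Your negative half is fine and matches the paper: both invoke Theorem~\ref{Th-rec} and the non-regularity of $A_{yes}^l$.

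The positive half has a genuine gap. Your two-state automaton cannot recognise $A^l$ \emph{exactly} on the reject side: the rejecting probability is $\sin^2(m\sqrt{2})$, and this equals $1$ only if $m\sqrt{2}\in \tfrac{\pi}{2}+\pi\mathbb{Z}$. For every integer $m$ the left-hand side is algebraic while every element of the right-hand side is transcendental, so \emph{no} integer $m$ --- in particular not $m=-l$ --- ever yields rejecting probability exactly $1$. The ``parallel transcendence argument'' you invoke therefore proves precisely that the two-state machine \emph{fails} the reject-side iff, rather than establishing it. The hypothesis on $l$ only guarantees $\cos(l\sqrt{2})\le 0$, hence $\sin^2(l\sqrt{2})\ge\tfrac12$; that is bounded error, not exact recognition.

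Your fallback (``add one basis state to absorb residual amplitude'') points in the right direction, but this is exactly where the content of the proof lies, and you have not said how to do it. The paper's construction is a genuine three-state machine from the outset: $|0\rangle$ is the sole accepting state and is \emph{fixed} by $U_a$ and $U_b$, which act as rotations by $\pm\theta$ (with $\theta=\sqrt{2}\pi$) only in the $|1\rangle,|2\rangle$ plane. The end-markers are nontrivial: $U_{\cent}$ sends $|0\rangle$ to $\alpha|0\rangle+\beta|1\rangle$ and $U_{\$}=U_{\cent}^{-1}$, so that after a word with $\#_b-\#_a=k$ the final $|0\rangle$-amplitude is $\alpha^2+\beta^2\cos k\theta$. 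The crucial step is to \emph{choose} $\alpha,\beta$ depending on $l$ so that this amplitude vanishes at $k=l$, namely $\alpha^2=\dfrac{-\cos l\theta}{1-\cos l\theta}$ and $\beta^2=\dfrac{1}{1-\cos l\theta}$; this is precisely where the standing hypothesis on $l$ (equivalently $\cos l\theta\le 0$) is used, to make $\alpha$ real. The reject-side iff then reduces to the equation $\cos k\theta=\cos l\theta$, which is handled via the irrationality of $\sqrt{2}$. None of this mechanism --- the protected accepting direction, the nontrivial end-marker unitaries, or the $l$-dependent choice of $\alpha,\beta$ --- appears in your proposal, and ``absorb residual amplitude'' does not by itself explain how to force the accepting amplitude to be \emph{exactly} zero on $A_{no}^l$ while remaining nonzero elsewhere.
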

 \begin{proof}
Let
    \begin{equation}
    \theta=\sqrt{2}\pi, p=\cos l\theta, \alpha= \sqrt{\frac{-p}{1-p}}=\sqrt{\frac{-\cos l\theta}{1-\cos l\theta}} \mbox{ and } \beta=\sqrt{\frac{1}{1-p}}=\sqrt{\frac{1}{1-\cos l\theta}}.
    \end{equation}
     We will now construct a pvMO-1QFA ${\cal M}^l=(Q,\{a,b\},\{U_{\sigma}\,|\, \sigma\in\{|\hspace{-1mm}c, a,b,\$\} \},|{0}\rangle,Q_a,Q_r)$ to recognize  $A^{l}$ exactly, where
 \begin{itemize}
   \item $Q=\{|{0}\rangle,|{1}\rangle,|{2}\rangle\}$, $Q_a=\{|{0}\rangle\}$, $Q_r=\{|{1}\rangle, |{2}\rangle\}$.
   \item $U_{\sigma}$ are defined as follows:
    \begin{equation}
    U_{|\hspace{-1mm}c}=\left(
      \begin{array}{ccc}
        \alpha \ & -\beta \ & 0  \\
        \beta \ & \alpha \ & 0   \\
        0 \ &  0  \  &1\\
      \end{array}
    \right),\
    U_{a}=\left(
      \begin{array}{ccc}
        1 & 0 & 0  \\
        0 & \cos \theta & \sin \theta   \\
        0 &  -\sin \theta  &\cos \theta\\
      \end{array}
    \right) ,\
    U_{b}=\left(
      \begin{array}{ccc}
        1 & 0 & 0  \\
        0 & \cos \theta & -\sin \theta   \\
        0 &  \sin \theta  &\cos \theta\\
      \end{array}
    \right), \
    U_{\$}=U_{|\hspace{-1mm}c}^{-1}.
\end{equation}
 \end{itemize}

 See \cite{GQZ14b} for more intuitions  why we choose $U_{|\hspace{-1mm}c}$ and $U_{\$}$ in the way as above.   Since $U_{a}U_{b}=U_{b}U_{a}=I$, for $w=\sigma_1\ldots \sigma_{|w|}\in\{a,b\}^*$, we have
 \begin{equation}
 U_{w}=U_{\sigma_{|w|}} \ldots U_{\sigma_1}=U_{a}^{\#_a(w)}U_{b}^{\#_b(w)}.
 \end{equation}

 Let $\#_a(w)=n$ and  $\#_b(w)=m$.
 If   $w\in A^{l}_{yes}$, then the quantum state before the measurement is
\begin{equation}
    |q\rangle=U_{\$}U_wU_{|\hspace{-1mm}c}|0\rangle=U_{\$}(U_a)^{n}(U_b)^{m}U_{|\hspace{-1mm}c}|0\rangle=U_{\$}(U_a)^{n}(U_b)^{n}U_{|\hspace{-1mm}c}|0\rangle=|0\rangle
\end{equation}
and if the input  $w\in A^{l}_{no}$, then the quantum state before the measurement is
\begin{equation}
    |q\rangle=U_{\$}U_wU_{|\hspace{-1mm}c}|0\rangle=U_{\$}(U_a)^{n}(U_b)^{m}U_{|\hspace{-1mm}c}|0\rangle=U_{\$}(U_a)^{n}(U_b)^{n+l}U_{|\hspace{-1mm}c}|0\rangle=   U_{\$}(U_b)^{l}U_{|\hspace{-1mm}c}|0\rangle=\gamma_1|1\rangle+\gamma_2|2\rangle,
\end{equation}
 where  $\gamma_1$ and $\gamma_2$ are  amplitudes that we do not need to specify more exactly.

Since the amplitude at  $|0\rangle$ in the above quantum state $|q\rangle $ is 0, we  get the exact result after the measurement of $\gamma_1|1\rangle+\gamma_2|2\rangle$ in the standard basis $\{|0\rangle,|1\rangle,|2\rangle\}$. Therefore, we have
\begin{itemize}
  \item if $w\in A^{l}_{yes}$, then $Pr[{\cal M}^l\  \text{accepts}\  w]=1$;
  \item if $w\in A^{l}_{no}$, then $Pr[{\cal M}^l\ \text{rejects}\  w]= 1$.
\end{itemize}

We now give the proof for the other direction. Namely, we show that
  $Pr[{\cal M}^l\  \text{accepts}\  w]=1$ implies that  $w\in A^{l}_{yes}$.

Assume that $w\not\in A^{l}_{yes}$, that is $\#_a(w)\neq \#_b(w)$.  The quantum state before the measurement is
\begin{align}
    |q\rangle&=U_{\$}U_wU_{|\hspace{-1mm}c}|0\rangle=U_{\$}(U_a)^{n}(U_b)^{m}U_{|\hspace{-1mm}c}|0\rangle=  U_{\$}(U_b)^{m-n}U_{|\hspace{-1mm}c}|0\rangle\\
    &=\left(
      \begin{array}{ccc}
        \alpha \ & \beta \ & 0  \\
        -\beta \ & \alpha \ & 0   \\
        0 \ &  0  \  &1\\
      \end{array}
    \right)
    \left(
      \begin{array}{ccc}
        1 & 0 & 0  \\
        0 & \cos (m-n)\theta & -\sin (m-n)\theta   \\
        0 &  \sin (m-n)\theta  &\cos (m-n)\theta\\
      \end{array}
    \right)
    \left(
      \begin{array}{ccc}
        \alpha \ & -\beta \ & 0  \\
        \beta \ & \alpha \ & 0   \\
        0 \ &  0  \  &1\\
      \end{array}
    \right)
    \left(
      \begin{array}{c}
        1   \\
       0    \\
        0 \\
      \end{array}
    \right)\\
    &=\left(
      \begin{array}{c}
       \alpha^2+ \beta^2\cos (m-n) \theta \\
       -\alpha\beta+ \alpha\beta \cos (m-n) \theta  \\
        \beta\sin (m-n)\theta  \\
      \end{array}
    \right).
 \end{align}
Since $\theta=\sqrt{2}\pi$, there are no integers $m\neq n$ such that $\cos (m-n) \theta=1$.  Therefore $ \alpha^2+ \beta^2\cos (m-n) \theta \neq 1$ and $Pr[{\cal M}^l\  \text{accepts}\  w]\neq 1$.

 We now prove the following:
If  $Pr[{\cal M}^l\  \text{rejects}\  w]=1$, then the input $w\in A^{l}_{no}$.

Assume that $w\not\in A^{l}_{no}$, that is $\#_a(w)\neq \#_b(w)+l$.  The quantum state before the measurement is
\begin{align}
    |q\rangle&=U_{\$}U_wU_{|\hspace{-1mm}c}|0\rangle=U_{\$}(U_a)^{n}(U_b)^{m}U_{|\hspace{-1mm}c}|0\rangle=  U_{\$}(U_b)^{m-n}U_{|\hspace{-1mm}c}|0\rangle=\left(
      \begin{array}{c}
       \alpha^2+ \beta^2\cos (m-n) \theta \\
       -\alpha\beta+ \alpha\beta \cos (m-n) \theta  \\
        \beta\sin (m-n)\theta  \\
      \end{array}
    \right).
 \end{align}
Let $m-n=l'$. Since $\theta=\sqrt{2}\pi$ and $m\neq n+l$,  we have

\begin{equation}
\alpha^2+ \beta^2\cos (m-n) \theta =\alpha^2+ \beta^2\cos l' \theta=\frac{-\cos l\theta}{1-\cos l\theta}+\frac{1}{1-\cos l\theta}\cos l' \theta=\frac{\cos l' \theta -\cos l\theta}{1-\cos l\theta}\neq 0.
\end{equation}
  Therefore,  $Pr[{\cal M}^l\  \text{accepts}\  w]\neq 1$.

Hence, we have proved that the promise problem  $A^{l}$ can be recognized exactly by the pvMO-1QFA  ${\cal M}^l$.  Obviously, $A^{l}_{yes}$ and $A^{l}_{no}$ are not regular languages. According to Theorem \ref{Th-rec}, the promise problem  $A^{l}$ cannot be recognized by any pvDFA.
\end{proof}

\begin{remark}
  From Theorem \ref{ThMOQ-DFA} it implies that there are three subsets (non-regular languages) that can be distinguished precisely by a pvMO-1QFA, but any pvDFA cannot do it, and this result further shows a stronger aspect of 1QFA than DFA.
\end{remark}

We will now consider
solvability mode.
Geffert and Yakary{\i}lmaz \cite{GY14} proved that the promise problem \mbox{ExpEQ}($c$)\footnote{
$\mbox{ExpEQ}(c)= \left\{\begin{array}{ll}
\mbox{ExpEQ}_{yes}(c)=\{(a^mb^n)^{3 (2c^2)^{m+n}\cdot\lceil\ln c\rceil}\mid m,n \in \mathbf{N}_{+}, m=n\} \\
\mbox{ExpEQ}_{no}(c)\ =\{(a^mb^n)^{3 (2c^2)^{m+n}\cdot\lceil\ln c\rceil}\mid m,n \in \mathbf{N}_{+}, m\neq n\}
                  \end{array}
                   \right.,
                  $
 where  $c\geq 3$ is an integer.
}
can be solved by a {\em one-way probability finite automaton} (PFA) ${\cal A}(c)$, but there is  no DFA solving \mbox{ExpEQ}($c$).
Rashid and Yakary{\i}lmaz \cite{RY14} proved that a promise problem can be solved by  a Las Vegas realtime rtQCFA or by  an exact rational restarting rtQCFA in linear expected time, where there is no bounded-error PFA that solves the promise problem.
In order to prove that 1QCFA have advantages  in solving promise problems comparing to their classical counterparts (PFA), we define a new promise problem
\begin{equation}
\mbox{PloyEQ}=\left\{\begin{array}{ll}
                      \mbox{PloyEQ}_{yes}=\{(a^{n}b^{m} \#)^t \mid  n=m \ \mbox{and}\ t\geq T\}, \\
                    \mbox{PloyEQ}_{no}\ =\{(a^{n}b^{m} \#)^t   \mid n\neq m \ \mbox{and}\ t\geq T\},
                  \end{array}
 \right.
 \end{equation}
where $T$ is a polynomial of $l=\max\{n,m\}$ which will be specified later.
 %\begin{equation}
%\mbox{PloyEQ}=\left\{\begin{array}{ll}
 %                     \mbox{PloyEQ}_{yes}=\{a^{n_1}b^{n_1} \#a^{n_2}b^{n_2}\#\ldots  a^{n_t}b^{n_t}  \# \mid t\geq T\} \\
 %                   \mbox{PloyEQ}_{no}\ =\{a^{n_1}b^{m_1} \#a^{n_2}b^{m_2}\# \ldots  a^{n_t}b^{m_t}\#   \mid n_i\neq m_i \ \mbox{and}\ t\geq T\}
 %                 \end{array}
 % \right.,
 %\end{equation}

 \begin{theorem} \label{ThQCFA-PFA}
For any $\varepsilon\leq \frac{1}{3}$,  the  promise problem $\mbox{PloyEQ}$ can be  solved by  a 1QCFA  with the error probability  $\varepsilon$, but there is no PFA solving  $\mbox{PloyEQ}$ with the error probability  $\varepsilon$.
\end{theorem}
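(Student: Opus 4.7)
The plan splits the statement into a positive construction (1QCFA upper bound) and an impossibility (PFA lower bound), exploiting that the polynomial $T(l)$ is chosen simultaneously with the proof.

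\textbf{Upper bound.} I will build a 1QCFA with a single qubit and a two-track classical control. On $a$ rotate the qubit by $\theta=\sqrt{2}\pi$, on $b$ by $-\theta$; on each $\#$ measure the qubit in the computational basis and, if the outcome is $|1\rangle$, transition the classical control into an absorbing ``doomed'' track; on the right end-marker $\$$, accept iff the classical control has not entered the doomed track. Within one block $a^nb^m$ the accumulated rotation equals $(n-m)\theta$, so on any yes-instance the qubit stays at $|0\rangle$ and the input is accepted deterministically. For a no-instance, a single $\#$ yields outcome $|0\rangle$ with probability $\cos^2((n-m)\theta)$ and leaves the post-measurement state at $|0\rangle$, so the $t$ measurements are independent and the acceptance probability is $\cos^{2t}((n-m)\theta)$. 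Because $\sqrt{2}$ has bounded partial quotients, there exists an absolute constant $c>0$ with $|\sin(k\theta)|\geq c/|k|$ for every integer $k\neq 0$; consequently, with $l=\max\{n,m\}$, $\cos^{2t}((n-m)\theta)\leq(1-c^2/l^2)^t\leq e^{-c^2t/l^2}$, so taking $T(l)=\lceil c^{-2}l^2\ln(1/\varepsilon)\rceil$, which is polynomial in $l$, forces the no-error to be at most $\varepsilon$.

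\textbf{Lower bound.} Suppose a PFA $\mathcal{A}$ with $k$ states and stochastic transitions $M_a,M_b,M_\#$ (initial distribution $\pi$, accepting vector $\eta$) solves PloyEQ with error $\leq 1/3$. Setting $B(n,m)=M_a^nM_b^mM_\#$, the acceptance probability on $(a^nb^m\#)^t$ is $\pi^{\mathsf T}B(n,m)^t\eta$, and the $1/3$-gap forces $|\pi^{\mathsf T}(B(n,n)^t-B(n,m)^t)\eta|\geq 1/3$ for every $n\neq m$ with $t=T(\max\{n,m\})$. To contradict this, I will produce $n\neq m$ with $\|B(n,n)-B(n,m)\|_1$ small enough that the multiplicative bound $\|B(n,n)^t-B(n,m)^t\|_1\leq t\,\|B(n,n)-B(n,m)\|_1$ (valid because stochastic matrices are $1$-contractions in $\ell_1$) falls strictly below $1/3$ at $t=T(\max\{n,m\})$. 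Such a pair is obtained via simultaneous Dirichlet approximation of the arguments of the unit-modulus eigenvalues of $M_b$, combined with the geometric decay $|\lambda|^{\min\{n,m\}}$ of the strict-contraction part of the spectrum; pinning $T(l)$ to the concrete polynomial used in the upper bound and choosing $l$ large enough so that the approximation gain beats $1/(3T(l))$ closes the argument.

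\textbf{Main obstacle.} The delicate piece is the quantitative matching between the Dirichlet-approximation scale and the polynomial $T(l)$. A crude application of simultaneous approximation of $k$ angles only gives approximation $l^{-1/k}$ with indices in $[1,l]$, which is insufficient to defeat even a quadratic $T(l)$ once $k$ grows. The key observation to push the argument through is that the effective number of parameters governing $M_b^n$ for large $n$ is the number of unit-modulus eigenvalues of $M_b$ (typically much smaller than $k$), while the strict-contraction eigenvalues contribute only a term exponentially small in $\min\{n,m\}$. Organizing this spectral decomposition uniformly in $\mathcal{A}$, and aligning the indices $n,m$ produced by Dirichlet with the specific polynomial $T(l)$ forced by the upper-bound construction, is the technical crux of the lower bound.
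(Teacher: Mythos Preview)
Your upper bound is exactly the paper's construction (single-qubit rotation by $\pm\sqrt{2}\pi$, measure at each $\#$, accept at $\$$), with the same quantitative input $\sin^2((n-m)\theta)\gtrsim 1/(n-m)^2$ and the same choice $T(l)\asymp l^2\ln(1/\varepsilon)$.

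For the lower bound the paper does something much shorter than what you sketch: it invokes the Greenberg--Weiss/Dwork--Stockmeyer lemma that no 2PFA running in $\exp(o(|w|))$ expected time can keep $|Pr[\text{accept }a^nb^n\#]-Pr[\text{accept }a^nb^{n+d}\#]|\geq\varepsilon$ for all large $n$ and all $d$. If a PFA $\mathcal A$ solved $\mbox{PloyEQ}$, the 2PFA that sweeps back and forth $T$ times simulating $\mathcal A$ on each pass would violate this lemma, since $T$ is polynomial in $|w|$. That is the whole argument.

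Your direct matrix approach, as written, has a real gap. Dirichlet approximation of the unit-modulus eigenvalues is not strong enough: even for a \emph{single} eigenvalue $e^{i\alpha}$, Dirichlet only gives $\|q\alpha\|<1/q$ for infinitely many $q$, hence $\|M_b^{n}-M_b^{n+q}\|\approx 1/q$, while you need this below $1/(3T(l))\asymp 1/l^2\leq 1/q^2$ (since $l=\max\{n,m\}\geq |n-m|=q$). For a badly approximable $\alpha$ this is impossible, so ``reducing to the number of unit-modulus eigenvalues'' does not close the argument. What actually saves you is a structural fact you did not use: every unit-modulus eigenvalue of a stochastic matrix is a root of unity and is semisimple (Perron--Frobenius on the recurrent classes, together with $\|M_b^n\|_\infty=1$). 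Hence there is an integer period $p$ with $\|M_b^{n}-M_b^{n+p}\|\leq C\rho^{n}$ for some $\rho<1$ depending only on $\mathcal A$; taking $m=n+p$ and $n$ large makes $T(l)\cdot C\rho^{n}\to 0$, and your telescoping bound $\|B^t-B'^t\|_1\leq t\|B-B'\|_1$ then yields the contradiction. With this correction your route is a self-contained re-derivation of the cited lemma in the special case needed here; the paper's route simply quotes the lemma.
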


 \begin{proof}
 Let $\theta=\sqrt{2}\pi$.
We design a   1QCFA ${\cal M}=(Q,S,\Sigma,\Theta,\Delta,\delta,|q_{0}\rangle,s_{0},S_{a},S_{r})$  to solve the promise problem $\mbox{PloyEQ}$, where $Q=\{|{0}\rangle,|{1}\rangle\}$.   The automaton ${\cal M}$ proceeds as shown in Figure \ref{f3}, where
    \begin{equation}
    U_{|\hspace{-1mm}c}= U_{\$}=I,\
    U_{a}=\left(
      \begin{array}{cc}
       \cos \theta & \sin \theta   \\
        -\sin \theta  &\cos \theta\\
      \end{array}
    \right) ,\
    U_{b}=\left(
      \begin{array}{cc}
      \cos \theta & -\sin \theta   \\
      \sin \theta  &\cos \theta\\
      \end{array}
    \right).
     \end{equation}

 \begin{figure}[htbp]
 %  %Requires \usepackage{graphicx}
\begin{tabular}{|l|}
    \hline

\begin{minipage}[t]{0.8\textwidth}
\begin{enumerate}
\item[1.] Read the left end-marker $|\hspace{-1.5mm}c$,  perform $U_{|\hspace{-1mm}c}=I$ on the initial quantum state $|0\rangle$, do not change its classical state, and move the tape head one cell to the right.

\item[2.] Until the currently  scanned symbol $\sigma$ is  the right end-marker $\$$, do the following:
 \begin{enumerate}
 \item[2.1] If $ \sigma\neq \#$, apply $\Theta(s_0,\sigma)=U_{\sigma}$ to the current quantum state, do not change its classical state, and move the tape head one cell to the right.
 \item[2.2] Otherwise, measure the current quantum state with $M=\{|0\rangle\langle 0|,|1\rangle\langle 1| \}$.   If the outcome is $|1\rangle$, reject the input and halt. Otherwise, move the tape head one cell to the right.
\end{enumerate}
\item[3.] Accept the input and halt.

\end{enumerate}

\end{minipage}\\

\hline
\end{tabular}
 \centering\caption{  The 1QCFA  solving the promise problem $\mbox{PloyEQ}$.  }\label{f3}
\end{figure}

 Let us choose $T=\lceil 2l^2\log_{e}\frac{1}{\varepsilon}\rceil$.
 If the input $w\in \mbox{PloyEQ}_{yes}$, then the quantum state before the measurement in the Step 2.2 is always $|0\rangle$. Therefore, the input will be accepted with certainty.

  If the input $w\in \mbox{PloyEQ}_{no}$, the quantum state before the $i$-th measurement in the Step 2.2 is
  \begin{align}
  |q\rangle&=U_{a}^{n}
    U_{b}^{m}=\left(
      \begin{array}{cc}
       \cos \theta & \sin \theta   \\
        -\sin \theta  &\cos \theta\\
      \end{array}
    \right)^{n}
\left(
      \begin{array}{cc}
      \cos \theta & -\sin \theta   \\
      \sin \theta  &\cos \theta\\
      \end{array}
    \right)^{m}\\
   & =\left(
      \begin{array}{cc}
      \cos (m-n)\theta & -\sin (m-n)\theta   \\
      \sin (m-n)\theta  &\cos (m-n)\theta\\
      \end{array}
    \right).
   \end{align}

   According to \cite {Amb02,Zhg12}, the rejecting probability after the $i$-th measurement is
   \begin{equation}
P_{ir} >\frac{1}{2(m-n)^2+1}>\frac{1}{2l^2}.
\end{equation}
and the overall probability that ${\cal M}$ rejects the input $w$  is
\begin{align}
Pr[{\cal M}\ \text{rejects}\  w]&=    \sum_{i\geq
1}^{t}\left(P_{ir}\prod^{i-1}_{i=1}  (1-P_{r(i-1)})\right)
> \sum_{i\geq
1}^{t}\left(\frac{1}{2l^2}\prod^{i-1}_{i=1}  (1-\frac{1}{2l^2})\right)\\
&=\sum_{i\geq
1}^{t}\frac{1}{2l^2} (1-\frac{1}{2l^2})^{i-1}=\frac{1}{2l^2} \frac{1-(1-\frac{1}{2l^2})^t}{\frac{1}{2l^2}}=1-(1-\frac{1}{2l^2})^t.
\end{align}

Since $1-x\leq e^{-x}$, we have
\begin{align}
Pr[{\cal M}\ \text{rejects}\  w]>1-(1-\frac{1}{2l^2})^t>1-e^{-\frac{1}{2l^2}t}\geq 1-e^{-\frac{1}{2l^2} 2l^2\log_{e}\frac{1}{\varepsilon}}=1-e^{-\log_{e}\frac{1}{\varepsilon}}=1-\varepsilon.
\end{align}

Therefore, the promise problem $\mbox{PloyEQ}$  can be solved by a 1QCFA ${\cal M}$  with the error probability  $\varepsilon$.

Assume now that there is a PFA ${\cal A}$ solving  $\mbox{PloyEQ}$ with the error probability $\varepsilon$. Let us consider a 2PFA ${\cal M}$ running as follows:
\begin{enumerate}
  \item ${\cal M}$ reads the input $w$ from the left to the right -- symbol by symbol;
  \item After reading each  $\sigma\in\{a,b,\#\}$,  ${\cal M}$ simulates the transformation of the PFA ${\cal A}$ reading $\sigma$;
  \item When ${\cal M}$ reaches the right-end marker,  ${\cal M}$ moves its   tape head   to the left most  symbol of the input $w$ and  reads the input $w$ again.
\end{enumerate}

If  ${\cal M}$ reads the input $w$  $T$  times,  then we have,  according to the above assumption,
\begin{equation}
Pr[{\cal M}\ \mbox{accepts}\  a^nb^n\#]=Pr[{\cal A}\ \mbox{accepts}\  a^nb^n\#]\geq 1-\epsilon
\end{equation}
and
\begin{equation}
Pr[{\cal M}\ \mbox{accepts}\  a^nb^m\#]=Pr[{\cal A}\ \mbox{accepts}\  a^nb^m\#]\leq 1-Pr[{\cal A}\ \mbox{rejects}\  a^nb^m\#] \leq \epsilon
\end{equation}
where $n\neq m$.

Therefore, for any integers $n$ and $d>0$, it holds
\begin{equation}\label{Eq-c}
 \left|Pr[{\cal M}\ \mbox{accepts}\  a^nb^n\#]-Pr[{\cal M}\ \mbox{accepts}\  a^nb^{n+d}\#] \right|\geq 1-2\epsilon\geq \epsilon.
\end{equation}

Since $T$ is a polynomial of the length of the input $w$,  the following  lemma holds (as  in \cite{GW86,DwS92}):
\begin{lemma}\label{Lm-GW86}
Let $\varepsilon\leq\frac{1}{3}$. Suppose that ${\cal M}$ is a  two-way probabilistic finite automaton (2PFA) with $\mbox{exp}(o(|w|))$ expected running time, where $|w|$ is the length of the input. Then there exists, for all sufficiently large $n$,  an integer $d$ such that
\begin{equation}\label{Eq-gw86}
    \left|Pr[{\cal M}\ \mbox{accepts}\  a^nb^n\#]-Pr[{\cal M}\ \mbox{accepts}\  a^nb^{n+d}\#] \right| <\epsilon.
\end{equation}
\end{lemma}

Obviously, Equality   (\ref{Eq-c})   contradicts Equality(\ref{Eq-gw86}).
Therefore,  there is no PFA  solving $\mbox{PloyEQ}$ with  the error probability  $\varepsilon$.
\end{proof}

We now study state complexity.  We consider the following promise problem
\begin{equation}
A(p)=\left\{\begin{array}{ll}
                     A_{yes}(p)=\{a^{ip+l_1}\,|\, 0\leq l_1<p, \cos^2 l_1 \theta \geq 2/3, \ i\geq 0\}, \\
                    A_{no}(p)\ =\{a^{ip+l_2}\,|\, 0\leq l_2<p, \cos^2 l_2 \theta \leq 1/3, \ i\geq 0\},
                  \end{array}
 \right.
 \end{equation}
where $\theta=\pi/p$.

\begin{theorem} \label{ThMOQ-MOQ}
For integer $p\geq 6$,
the promise problems  $A(p)$ can be solved with error probability $\epsilon\leq 1/3$ by an  MO-1QFA with two quantum basis states, but can not be solved exactly by any MO-1QFA with two quantum basis states.
\end{theorem}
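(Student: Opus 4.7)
My plan is to handle the two directions separately.

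\textbf{Solvability with error $\le 1/3$.} I would take the simplest possible MO-1QFA $\mathcal{M}$ with $Q=\{|0\rangle,|1\rangle\}$, $Q_a=\{|0\rangle\}$, trivial end-marker unitaries $U_{\cent}=U_{\$}=I$, and $U_a$ equal to the rotation by $\theta=\pi/p$ on the qubit. After reading $a^n$ the state is $\cos(n\theta)|0\rangle+\sin(n\theta)|1\rangle$, so $\Pr[\mathcal{M}\text{ accepts }a^n]=\cos^2(n\theta)$. Writing $n=ip+l$ and using $\cos(i\pi+l\theta)=(-1)^i\cos(l\theta)$, this quantity equals $\cos^2(l\theta)$. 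By the definition of $A(p)$ it is $\ge 2/3$ on $A_{yes}(p)$ and $\le 1/3$ on $A_{no}(p)$, giving error at most $1/3$.

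\textbf{Impossibility of exact solvability.} My starting point is that for any MO-1QFA with two basis states the accepting probability on $a^n$ can be put in the form
\[
\Pr[\mathcal{M}\text{ accepts }a^n]=A+B\cos(n\omega+\phi),
\]
for real constants $A,B,\phi$, where $\omega$ is the difference of the two eigenphases of $U_a$. I would derive this by diagonalising $U_a=V\operatorname{diag}(e^{i\alpha},e^{i\beta})V^{\dagger}$, writing the effective initial vector in the eigenbasis, and absorbing $U_{\$}$ into the measurement projector, which for any nontrivial probability must be a rank-one projector on $\mathbb{C}^2$. Exact solvability forces both $0$ and $1$ to appear as values, so $B\neq 0$; and since each residue class modulo $p$ contains arbitrarily large representatives that must produce the same probability, one forces $p\omega\equiv 0\pmod{2\pi}$ (a short parity sub-case rules out $p\omega\equiv\pi\pmod{2\pi}$). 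Thus $\omega=2\pi j/p$ for some integer $j$.

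Next I would exploit the combinatorial structure of $L_{yes}:=\{l:0\le l<p,\ \cos^2(l\pi/p)\ge 2/3\}$. The crucial observation, following from $\cos^2(\pi/p)\ge\cos^2(\pi/6)=3/4$ together with $\cos^2((p-1)\pi/p)=\cos^2(\pi/p)$, is that $\{0,1,p-1\}\subseteq L_{yes}$ for every $p\ge 6$. Exact solvability requires $\cos(l\omega+\phi)$ to take a single common value $c_y$ on $L_{yes}$. Applying this at $l=1$ and $l=p-1$, and using $(p-1)\omega\equiv-\omega\pmod{2\pi}$, gives $\cos(\omega+\phi)=\cos(-\omega+\phi)$, hence $\sin\omega\sin\phi=0$.

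The remaining step is a short case analysis, which I expect to be the main place where care is needed. If $\sin\omega=0$, then either $\omega=0$, in which case the probability is constant in $n$ and cannot separate yes from no, or $\omega=\pi$ (forcing $p$ even); in the latter case the conditions at $l=0,1\in L_{yes}$ yield $\cos\phi=0$ and hence $A=1$, but then $\cos(l\pi+\phi)=0$ for every $l$ makes the no-probability equal to $A=1\neq 0$, a contradiction. If instead $\sin\phi=0$, then $c_y=\pm 1$ and the identity $\cos(l\omega)=1$ at $l=1\in L_{yes}$ forces $\omega\equiv 0\pmod{2\pi}$, again making the probability constant. Either branch contradicts the separation of $A_{yes}(p)$ from $A_{no}(p)$, which completes the proof.
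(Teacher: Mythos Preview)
Your first part (the bounded-error construction) is exactly the paper's argument.

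For the impossibility direction your plan is correct, but the paper's route is notably shorter and avoids the trigonometric case analysis entirely. The paper observes only that $a^p$ and $a^{p+1}$ both lie in $A_{yes}(p)$ (since $p\ge 6$ gives $\cos^2(\pi/p)\ge 3/4$). Exact acceptance of these two words yields $U_{\$}(U_a)^pU_{\cent}|0\rangle=\alpha|0\rangle$ and $U_{\$}(U_a)^{p+1}U_{\cent}|0\rangle=\alpha'|0\rangle$, and combining them gives $U_{\$}U_aU_{\$}^{\dagger}|0\rangle=(\alpha'/\alpha)|0\rangle$. In dimension two this forces $U_{\$}U_aU_{\$}^{\dagger}$ to be diagonal, and then a one-line induction shows $U_{\$}(U_a)^{p+k}U_{\cent}|0\rangle=\alpha\beta_1^{k}|0\rangle$ for every $k\ge 0$, so every $a^{p+k}$ is accepted with probability $1$. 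Taking $k=\lceil p/2\rceil$ lands in $A_{no}(p)$ and gives the contradiction. In effect the paper exploits the two-dimensionality directly (one eigenvector forces diagonality), whereas you translate the same constraint into the parametric form $A+B\cos(n\omega+\phi)$ and then use three residues $\{0,1,p-1\}$ plus a case split on $\sin\omega\sin\phi=0$. Your approach works and is arguably more systematic (it would generalise to other acceptance functions of this shape), but it is longer, and your ``short parity sub-case'' ruling out $p\omega\equiv\pi\pmod{2\pi}$ needs one more sentence: there you get $\cos\phi=0$ and hence $A=1$, and then the structural identity $A=|d_0|^2+|d_1|^2$ together with $|c_0|^2+|c_1|^2=|w_0|^2+|w_1|^2=1$ forces $B=2|d_0||d_1|=0$, contradicting $B\neq 0$.
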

\begin{proof}
We will now construct an MO-1QFA ${\cal M}(p)=(Q,\{a\},\{U_{\sigma}\,|\, \sigma\in\{|\hspace{-1mm}c, a,\$\} \},|{0}\rangle,Q_a)$ to solve  $A(p)$, where
 \begin{itemize}
   \item $Q=\{|{0}\rangle,|{1}\rangle\}$, $Q_a=\{|{0}\rangle\}$.
   \item $U_{\sigma}$ are defined as follows:
    \begin{equation}
    U_{|\hspace{-1mm}c}=U_{\$}=I,\
    U_{a}=\left(
      \begin{array}{ccc}
        \cos \theta & -\sin \theta   \\
        \sin \theta  &\cos \theta\\
      \end{array}
    \right).
\end{equation}
 \end{itemize}

If  input $w\in A_{yes}(p)$, then the quantum state before the measurement is
\begin{equation}
    |q\rangle=U_{\$}U_wU_{|\hspace{-1mm}c}|0\rangle=U_{\$}(U_a)^{ip+l_1}U_{|\hspace{-1mm}c}|0\rangle=(U_a)^{l_1}|0\rangle=\cos l_1\theta|0\rangle+\sin l_1\theta|1\rangle.
\end{equation}
The automaton ${\cal M}$ has the accepting probability
\begin{equation}
Pr[{\cal M}\ \mbox{accepts}\  w] =\cos^2 l_1\theta\geq 2/3.
\end{equation}
If  input $w\in A_{no}(p)$, then the quantum state before the measurement is
\begin{equation}
    |q\rangle=U_{\$}U_wU_{|\hspace{-1mm}c}|0\rangle=U_{\$}(U_a)^{ip+l_2}U_{|\hspace{-1mm}c}|0\rangle=(U_a)^{l_2}|0\rangle=\cos l_2\theta|0\rangle+\sin l_2\theta|1\rangle.
\end{equation}
The automaton ${\cal M}$ has the rejecting probability
\begin{equation}
 Pr[{\cal M}\ \mbox{rejects}\  w]=1- Pr[{\cal M}\ \mbox{accepts}\  w]=1-\cos^2 l_2\theta\geq 1-1/3=2/3.
\end{equation}
Therefore, $A(p)$ can be solved by the automaton ${\cal M}$ with error probability $1/3$.

Suppose that the promise problems  $A(p)$ can be solved exactly by an  MO-1QFA ${\cal M}'$ with two basis states. Without loss of generality, we assume that ${\cal M}'=(Q,\{a\},\{U_{\sigma}\,|\, \sigma\in\{|\hspace{-1mm}c, a,\$\} \},|{0}\rangle,Q_a)$, where $Q=\{|0\rangle,|1\rangle\}$ and $Q_a=\{|0\rangle\}$.

Since $p\geq 6$, we have $a^p\in A_{yes}(p)$ and $a^{p+1}\in A_{yes}(p)$. Since the probability that ${\cal M}'$ accepts $a^p$ is 1, we have
\begin{equation}
   U_{\$}(U_a)^pU_{|\hspace{-1mm}c}|0\rangle=\alpha|0\rangle,
\end{equation}
where $\alpha\in \mathbb{C}$ and $|\alpha|=1$. Therefore,  $(U_a)^pU_{|\hspace{-1mm}c}|0\rangle=\alpha U_{\$}^{\dag}|0\rangle$, where $U^{\dag}$ is conjugate and transpose of $U$.   Since the probability that ${\cal M}'$ accepts $a^{p+1}$ is also 1, we have also
\begin{equation}
   U_{\$}(U_a)^{p+1}U_{|\hspace{-1mm}c}|0\rangle=\alpha'|0\rangle,
\end{equation}
where $\alpha'\in \mathbb{C}$ and $|\alpha'|=1$. Therefore, we have
\begin{align}
   & U_{\$}U_a(U_a)^{p}U_{|\hspace{-1mm}c}|0\rangle=U_{\$}U_a \cdot \alpha U_{\$}^{\dag}|0\rangle  =\alpha'|0\rangle,\\
   &\Rightarrow U_{\$}U_a   U_{\$}^{\dag}|0\rangle  =\frac{\alpha'}{\alpha}|0\rangle.
\end{align}
It is easy to find out that
\begin{equation}
   U_{\$}U_a   U_{\$}^{\dag}=\left(
      \begin{array}{ccc}
        \beta_1 & 0   \\
        0  &\beta_2\\
      \end{array}
    \right)=\Lambda,
\end{equation}
    where $\beta_1=\frac{\alpha'}{\alpha}$ and $\beta_2\in \mathbb{C}$ with $|\beta_2|=1$. It is easy to see that $|\beta_1|=1$. Therefore, we have
   $U_a=U_{\$}^{\dag}\Lambda U_{\$}$.

   Now for any integer $k\geq 0$, we have
   \begin{equation}
   U_{\$}(U_a)^{p+k}U_{|\hspace{-1mm}c}|0\rangle=U_{\$}(U_a)^k(U_a)^pU_{|\hspace{-1mm}c}|0\rangle=U_{\$}(U_{\$}^{\dag}\Lambda U_{\$})^k\cdot \alpha U_{\$}^{\dag}|0\rangle =\alpha \Lambda^k |0\rangle=\alpha \beta_1^k|0\rangle.
\end{equation}
Obviously, $|\alpha \beta_1^k|=1$.  Therefore, for any $k\geq 0$, the automaton accepts the input $a^{p+k}$ with probability 1. If $k=\lceil p/2\rceil$, it is easy to check that $\cos ^2 k\theta\leq 1/3$ and $a^{p+k}\in A_{no}(p)$. Thus, we get a contradiction.
Therefore, the promise problems  $A(p)$ can not be solved exactly by any  MO-1QFA with two quantum basis states.
\end{proof}

\begin{remark}
In the previous theorem, the error probability $\varepsilon=1/3$. For  $p\geq \frac{\pi}{\arccos \sqrt{1-\varepsilon}}$,  using the same method as the previous theorem, we can prove that the following promise problem
\begin{equation}
A(p,\varepsilon)=\left\{\begin{array}{ll}
                     A_{yes}(p,\varepsilon)=\{a^{ip+l_1}\,|\, 0\leq l_1<p, \cos^2 l_1 \theta \geq 1-\varepsilon, \ i\geq 0\}, \\
                    A_{no}(p,\varepsilon)\ =\{a^{ip+l_2}\,|\, 0\leq l_2<p, \cos^2 l_2 \theta \leq \varepsilon, \ i\geq 0\},
                  \end{array}
 \right.
\end{equation}
where $\theta=p/\pi$,
 can be solved with error probability $\varepsilon$ by an MO-1QFA with two quantum  basis states, but can not be solved exactly by any MO-1QFA with two quantum basis states.
\end{remark}

We consider now the minimal PFA to solve the promise problem $A(p)$ with $p$ is prime.

\begin{theorem} \label{ThMOQ-PFA}
For any prime $p>6$,
the minimal PFA solving the promise problem  $A(p)$ with error probability (smaller than $1/2$) has $p$ states.
\end{theorem}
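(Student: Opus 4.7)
The plan is to establish matching upper and lower bounds of $p$. For the upper bound I exhibit a $p$-state cyclic DFA $\mathcal B=(\{s_0,\dots,s_{p-1}\},\{a\},\delta,s_0,S_a,S_r)$ with $\delta(s_i,a)=s_{(i+1)\bmod p}$, $S_a=\{s_l:\cos^2 l\theta\ge 2/3\}$ and $S_r=\{s_l:\cos^2 l\theta\le 1/3\}$. After reading $a^n$ the automaton rests at $s_{n\bmod p}$, so every input in $A_{yes}(p)$ is accepted and every input in $A_{no}(p)$ is rejected with probability $1$; since a DFA is a special PFA, this gives the bound $\le p$.

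For the lower bound I would argue by contradiction: suppose a PFA $\mathcal A$ with $k<p$ states solves $A(p)$ with some error $\varepsilon<1/2$, and let $M\in\mathbb R^{k\times k}$ be its stochastic transition matrix on letter $a$, $\pi\in\mathbb R^{1\times k}$ its initial distribution, and $\eta\in\{0,1\}^{k\times 1}$ the accepting-states indicator, so that the accepting probability on $a^n$ is $f(n)=\pi M^n\eta$. The key step is spectral: by the Perron--Frobenius decomposition of $M$, the state set partitions into transient states together with closed communicating classes $C_1,\dots,C_r$ of periods $d_1,\dots,d_r$. The transient-to-transient block has spectral radius strictly less than $1$, while the unimodular eigenvalues of $M$ are precisely the union of the $d_j$-th roots of unity. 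Each closed class satisfies $|C_j|\ge d_j$, so $d_j\le k<p$; since $p$ is prime, $p\nmid d_j$ for every $j$, and consequently no primitive $p$-th root of unity lies in the spectrum of $M$.

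I would then derive a contradiction by Cesaro averaging along a fixed residue $l$. Define
\[
\bar f(l)\;=\;\lim_{N\to\infty}\frac{1}{N}\sum_{i=0}^{N-1} f(ip+l),\qquad l=0,1,\dots,p-1.
\]
Using the Jordan form $f(n)=\sum_{\lambda\in\sigma(M)} P_\lambda(n)\,\lambda^n$, the terms with $|\lambda|<1$ decay exponentially and vanish in the average, while a unimodular eigenvalue $\lambda$ contributes $\lambda^l c_\lambda$ if $\lambda^p=1$ and $0$ otherwise (because $\frac{1}{N}\sum_{i<N}(\lambda^p)^i\to 0$ when $\lambda^p\ne 1$). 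By the previous step the only $\lambda\in\sigma(M)$ with $\lambda^p=1$ is $\lambda=1$, so $\bar f(l)$ is a constant independent of $l$. On the other hand, the bounded-error condition forces $\bar f(l_1)\ge 1-\varepsilon>1/2$ for every yes-residue $l_1$ and $\bar f(l_2)\le\varepsilon<1/2$ for every no-residue $l_2$, and for $p>6$ both a yes-residue (take $l_1=0$, since $\cos^2 0=1$) and a no-residue (some $l_2$ near $p/2$, where $\cos^2(l_2\pi/p)$ is small) exist, contradicting constancy of $\bar f$.

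The main obstacle is the Perron--Frobenius input in the second paragraph: one must verify carefully that a reducible stochastic matrix on $k$ states has its unimodular eigenvalues only among the $d_j$-th roots of unity for the periods $d_j\le k$ of its recurrent classes, and that the transient block is strictly contractive and so drops out of the Cesaro limit. These are classical facts about finite Markov chains, but need some care because $M$ may be reducible and may have arbitrary real (not necessarily rational) entries.
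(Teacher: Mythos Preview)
Your argument is correct, but it proceeds quite differently from the paper's. For the lower bound the paper does not analyze the spectrum of the PFA at all; instead it notes that $A(p)$ contains as a subproblem the two-residue promise $A^{p,r_1,r_2}=\bigl(\{a^n:n\equiv r_1\bmod p\},\{a^n:n\equiv r_2\bmod p\}\bigr)$ for any yes-residue $r_1$ and no-residue $r_2$, and then invokes a result of Bianchi, Mereghetti and Palano \cite{BMP14} stating that the minimal bounded-error PFA for $A^{p,r_1,r_2}$ has exactly $d$ states, where $d$ is the least positive integer with $d\mid p$ and $d\nmid(r_2-r_1)$; primality of $p$ forces $d=p$. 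Your route is essentially a self-contained reproof of the relevant case of that cited theorem: the Perron--Frobenius step pins the unimodular spectrum of a $k$-state stochastic matrix inside $\bigcup_j\mu_{d_j}$ with each period $d_j\le k<p$, primality then excludes every nontrivial $p$-th root of unity, and the Ces\`aro average collapses $\bar f(l)$ to the single $\lambda=1$ contribution, independent of $l$. This makes transparent \emph{why} primality is the right hypothesis and removes the dependence on an external reference; the paper's version is shorter once one is willing to quote the PFA lower-bound literature. One small point worth making explicit in your write-up: when you pass from $P_\lambda(n)\lambda^n$ to a constant $c_\lambda\lambda^n$ for unimodular $\lambda$, you are using that stochastic matrices are power-bounded, hence every eigenvalue on the unit circle is semisimple; this is standard but belongs alongside the Perron--Frobenius facts you already flag as needing care.
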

\begin{proof}
We consider now a $p$-state DFA ${\cal A}=(S,
\{a\},\delta,s_0,S_a)$, with the set of states   $S=\{s_0,s_1,\ldots,s_{p-1}\}$,  the set of accepting states  $S_a=\{s_{l_1}\,|\,  0\leq l_1<p, \cos^2 l_1 \theta \geq 2/3 \}$, and the transition function $\delta(s_i,a)=s_{(i+1)\ {\it mod}\  p}$.
Obviously, the promise problem $A(p)$ can be solved by the automaton ${\cal A}$. A DFA is also a PFA.  Therefore, there is a PFA with $p$ states solving the promise problem $A(p)$. The minimal PFA that solving the promise problem $A(p)$ has not more than $p$ states.

Since $p>6$, there must be fix integers $r_1,r_2$ such that $\cos^2 r_1 \theta \geq 2/3 $ and $\cos^2 r_2 \theta \leq 1/3$.  We consider the following promise problem  \cite{GQZ14b}.  Namely, $A^{N,r_1,r_2}=(A_{yes}^{N,r_1},A_{no}^{N,r_2})$ with  $A_{yes}^{N,r_1}=\{a^{n}\,|\,\ n \equiv r_1\ {\it mod}\  N\}$ and $A_{no}^{N,r_2}=\{a^{n}\,|\,\ n \equiv r_2\ {\it mod}\  N\}$, where $N$, $r_1$ and $r_2$ are  fixed positive integers such that $r_1\not\equiv r_2\ {\it mod}\  N$. Let $N=p$ and $l=(r_2-r_1)\mod p$. According to Subsection 3.3, we have $A^{p,r_1,r_2}\leq A(p)$.
Any PFA that solving the promise problem $A(p)$ can also solve the promise problem  $A^{p,r_1,r_2}$. According to \cite{BMP14} (see Theorem 4), the minimal PFA solving the promise $A^{p,r_1,r_2}$ with error probability  has $d$ states, where $d$ is the smallest positive integer such that
$d \mid p$ and $d\nmid l$. Since $p$ is prime, we have $d=p$. Therefore, the minimal PFA that solving the promise problem $A(p)$ has at least $p$ states.
Thus, the theorem has been proved.
\end{proof}

\section{Conclusions and problems}

In order to make clear the difference between recognizability and solvability of quantum and classical finite automata, we have introduced several  promise versions  finite automata and  discussed their properties.
We have explored some basic properties of promise problems recognized and solved by pvDFA, and we have showed the state complexity  for several promise problems concerning recognizability and solvability. In particular, we have proved that one-way quantum finite automata can be more powerful than their classical counterparts when recognizing and solving some promise problems. More specifically,  we have proved:

\begin{itemize}
\item There is a promise problem that can be recognized exactly by {\em  measure-once one-way quantum finite automata} (MO-1QFA), but no {\em deterministic finite automata} (DFA) can recognize it.  Indeed, this result implies that there are three subsets (non-regular languages) that can be distinguished precisely by a pvMO-1QFA, but any pvDFA cannot do it.

      \item  There is a promise problem that can be solved with error probability $\epsilon\leq 1/3$ by {\em one-way finite automaton with quantum and classical states} (1QCFA), but no {\em one-way probability finite automaton} (PFA) can solve it with error probability  $\epsilon\leq 1/3$.

          \item Especially, there are promise problems $A(p)$ with size $p$ that can be solved {\em with any error probability} by MO-1QFA with only two quantum basis states, but they can not be solved {\em exactly} by any MO-1QFA with two quantum basis states; in contrast, the minimal {\em one-way probability finite automaton} (PFA) solving $A(p)$ {\it with any error probability} (usually smaller than $1/2$) has $p$ states.

\end{itemize}

However, there are still some problems to be considered for future research, and we list them in the following.

\begin{enumerate}

\item First we concern a problem related to {\em recognizability}: Suppose that a promise problem $A$ can be recognized  by a quantum (or probabilistic) finite automaton with error probability $\epsilon< 1/2$. Then, for any $\epsilon'< \epsilon$, whether is there a quantum (or probabilistic) finite automaton recognizing  $A$ with error probability $\epsilon'$? For solvability, this problem can be verified positively by using the idea of the languages accepted by PFA with {\em bounded error} (e.g., \cite{Paz71}).

    \item Second is a hierarchic problem for the classes solved by quantum finite automata mentioned in Section 1. Namely, let ${\cal C}(P)_{n}$  denote the class of promise problems solved exactly by an MO-1QFA with $n$ quantum basis states. Then, whether does ${\cal C}(P)_{m}\subset {\cal C}(P)_n$ hold for $m\leq n$?

  \item
  For any given regular language $L$, there is,  according to the Myhill-Nerode theorem, a  method to find out a minimal DFA ${\cal A}$ to recognize $L$ \cite{Yu98} .
  For some specific promise problems, it is possible to find out  minimal DFA (pvDFA) to solve the promise problems \cite{AmYa11,BMP14,GY14,GQZ14b}.
  However it is not clear yet whether there is a general way to find out a minimal pvDFA to solve a given promise problem that can be solved by a pvDFA?

   \item  We have proved that for any $\varepsilon\leq \frac{1}{3}$, the  promise problem $\mbox{PloyEQ}$ can be  solved by  a 1QCFA  with the error probability  $\varepsilon$, but there is no PFA solving  $\mbox{PloyEQ}$ with the error probability  $\varepsilon$ (Theorem \ref{ThQCFA-PFA}).  However, whether is there no PFA solving  $\mbox{PloyEQ}$ with the error probability  $1/3<\varepsilon<1/2$?
       Another challenge is to find out some simpler promise problems to demonstrate the advantage of 1QFA in solving promise problems, since the promise problem $\mbox{PloyEQ}$ is quite complex?

   \item   We have proved that the left side in Inequality (\ref{Eq-com}) is tight. Nevertheless, can we prove that the right side is tight?

\end{enumerate}

%\section{Conclusion and discussion}

 %\begin{question}
%Equivalence of two pvDFA and minimization of a given pvDFA.
%\end{question}

%\begin{question}
%Find out a algorithm to determine the relationship of two given pvDFA  ${\cal A}_1$  and ${\cal A}_2$.
%\end{question}

%If  promise problem $A$ can be solved by pvDFA ${\cal A}_1$ and ${\cal A}_1<{\cal A}_2$, then  promise problem $A$ can be solved by pvDFA ${\cal A}_2$.

%\begin{question}
%Find out a minimal pvDFA to solve a given promise problem $A$ if $A$ can be solved by pvDFA.
%\end{question}

%More questions:

%\begin{enumerate}
%  \item More properties of pvDFA?
 % \item Is pvDFA useful?
 % \item Other corresponding models? PDA?  PFA? QFA?
%\end{enumerate}

\section*{Acknowledgements}
This work  was  partly supported by the National
Natural Science Foundation of China (Nos. 61272058,  61472452, 61572532).

\section*{Appendix. The proof of Theorem \ref{Bound}}

\begin{proof}

Let $A_{yes}=\{(a^p)^*\}$ and $A_{no}=\{(a^q)^*a\}$, where $p,q>2$ are integers such that  $\gcd(p,q)=2$.  We first prove that $A_{yes}\cap A_{no}=\emptyset$.  Since $\gcd(p,q)=2$, there exist integers $k_1$ and $k_2$ such that $p=2k_1$ and $q=2k_2$.  Assume that $A_{yes}\cap A_{no}\neq \emptyset$. There must exist integers $i$ and $j$ such that $(a^p)^i=(a^q)^ja$, i.e. $ip=jq+1$. We have $1=ip-jq=i2k_1-j2k_2=2(ik_1-jk_2)$, which is a contradiction. Therefore, $A_{yes}\cap A_{no}=\emptyset$.

Let us consider now the promise problem $A=(A_{yes},A_{no})$.
Since $A_{yes}$  and $A_{no}$ are  regular languages,    the promise problem $A$ can be recognized by a pvDFA. Let us consider the following pvDFA ${\cal A}=(S,\{a\},\delta,s_0, S_a,S_r)$, where
\begin{itemize}
 \item $S=\{\langle s^1_{k\ {\it mod}\ p}, s^2_{k\ {\it mod}\ q}\rangle\mid k\geq 0\}$;
\item $s_{0}=\langle s_{0}^1,  s_{0}^2\rangle$;
\item $\delta(\langle s^1_i,s^2_j\rangle, a)=\langle s^1_{i+1\ {\it mod}\ p}, s^2_{j+1\ {\it mod}\ q}\rangle$;
\item $S_a= \{\langle s^1_{k\ {\it mod}\ p}, s^2_{k\ {\it mod}\ q}\rangle\mid k\equiv 0\ {\it mod}\ p\}$  and $S_r=\{\langle s^1_{k\ {\it mod}\ p}, s^2_{k\ {\it mod}\ q}\rangle\mid k\equiv 1\ {\it mod}\ q\}$.
\end{itemize}

At first, we prove that $|S|=\frac{1}{2}pq$. Let us assume  that  there exist $0\leq k_1<k_2< \frac{1}{2}pq-1$ such that $\langle s^1_{k_1\ {\it mod}\ p}, s^2_{k_1\ {\it mod}\ q}\rangle=\langle s^1_{k_2\ {\it mod}\ p}, s^2_{k_2\ {\it mod}\ q}\rangle $. This implies $k_1\equiv k_2\ {\it mod}\ p$ and $k_1\equiv k_2\ {\it mod}\ q$. Therefore, $p| (k_2-k_1)$ and $q| (k_2-k_1)$. Since $\gcd(p,q)=2$, we have $\frac{1}{2}pq| (k_2-k_1)$, which is a contradiction. Hence, $|S|\geq \frac{1}{2}pq$.  For any $h\geq \frac{1}{2}pq$, let $h=i\times \frac{1}{2}pq+k$ where $0\leq k< \frac{1}{2}pq$.  Since $p| \frac{1}{2}pq$ and $q| \frac{1}{2}pq$, we have  $\langle s^1_{h\ {\it mod}\ p}, s^2_{h\ {\it mod}\ q}\rangle=\langle s^1_{k\ {\it mod}\ p}, s^2_{k\ {\it mod}\ q}\rangle$. Therefore $|S|=\frac{1}{2}pq$.

Secondly, we prove that  $S_a\cap S_r=\emptyset$. Since $\gcd(p,q)=2$, we have $2| p$ and $2| q$.   Assume that $S_a\cap S_r\neq \emptyset$. In such a case, there must exist integers $i$ and $j$ such that $k=ip$ and $k=jq+1$. Therefore, we have $ip=jq+1$ and $2| (ip-jq)=1$, which is a contradiction.

Moreover, it is easy to see that the promise problem $A$ can be recognized by the pvDFA  ${\cal A}$.  Therefore, $sr(A)\leq |S|= \frac{1}{2}pq$.

Finally, we prove that the
  pvDFA ${\cal A}=(S,\{a\},\delta,s_0, S_a,S_r)$ is minimal. Let us consider DFA ${\cal A}'=(S,\{a\},\delta,s_0, S_a\cup S_r)$.  Obviously, the DFA ${\cal A}'$ recognizes the language  $A_{yes}\cup A_{no}$. We prove now that the DFA ${\cal A}'$ is minimal. Let $F=S_a\cup S_r$ and $n=\frac{1}{2}pq$.   For any $0\leq i<j<n$, we prove that the states $s_i=\langle s^1_{i\ {\it mod}\ p}, s^2_{i\ {\it mod}\ q}\rangle$ and  $s_j=\langle s^1_{j\ {\it mod}\ p}, s^2_{j\ {\it mod}\ q}\rangle$ are distinguishable.  Since $s_i\neq s_j$, at most one of the following two conditions    (1) $j-i\equiv 0 \mod p$ and (2) $j-i\equiv 0 \mod q$ holds. We have therefore the following three cases to consider:

  \begin{enumerate}
    \item The condition (1) holds and (2) does not hold. In such a case we have $\widehat{\delta}(s_i,a^{n-i+1})=\langle s^1_{n+1\ {\it mod}\ p}, \linebreak[0]s^2_{n+1\ {\it mod}\ q}\rangle=\langle s^1_{1}, s^2_{1}\rangle\in F$
    and $\widehat{\delta}(s_j,a^{n-i+1})=\langle s^1_{j+n-i+1\ {\it mod}\ p}, s^2_{j+n-i+1\ {\it mod}\ q}\rangle=\langle s^1_{1}, s^2_{j-i+1\ {\it mod}\ q}\rangle$. Since  $j-i \not\equiv 0\ {\it mod}\ q$, we have $j-i+1 \not\equiv 1\ {\it mod}\ q$. Therefore, $\widehat{\delta}(s_j,a^{n-i+1})\not\in F$.  Hence $s_i$ and $s_j$ are distinguishable.

    \item  The condition (2) holds and (1) does not hold. The proof is similar to the one  in the case 1.

    \item Neither the condition (1) nor (2)  holds. In such a case we have  $\widehat{\delta}(s_i,a^{n-i})=\langle s^1_{n\ {\it mod}\ p}, s^2_{n\ {\it mod}\ q}\rangle=\langle s^1_{0}, s^2_0\rangle\in F$ and  $\widehat{\delta}(s_j,a^{n-i})=\langle s^1_{n+j-i\ {\it mod}\ p}, s^2_{n+j-i\ {\it mod}\ q}\rangle=\langle s^1_{j-i\ {\it mod}\ p}, s^2_{j-i\ {\it mod}\ q}\rangle$. If $\widehat{\delta}(s_j,a^{n-i})\not\in F$, then $s_i$ and $s_j$ are distinguishable. Otherwise, we have $j-i\equiv 1\ {\it mod}\ q$ since $j-i\not\equiv 0\ {\it mod}\ p$. There are now two subcases to consider.
         \begin{enumerate}
        \item    $j-i\equiv 1\ {\it mod}\ p$. We have $p|(j-i-1)$ and $q|(j-i-1)$. Since $\gcd(p,q)=2$ and $0\leq i<j<n=\frac{1}{2}pq$, we have $j-i-1=0$ that is $j=i+1$. Therefore, $\widehat{\delta}(s_i,a^{n-i+1})=\langle s^1_{n+1\ {\it mod}\ p}, s^2_{n+1\ {\it mod}\ q}\rangle=\langle s^1_{1}, s^2_1\rangle\in F$ and  $\widehat{\delta}(s_j,a^{n-i+1})=\widehat{\delta}(s_{i+1},a^{n-i+1})=\langle s^1_{n+2\ {\it mod}\ p}, s^2_{n+2\ {\it mod}\ q}\rangle=\langle s^1_{2}, s^2_2\rangle\not\in F$. Hence, $s_i$ and $s_j$ are distinguishable.

         \item      $j-i\not\equiv 1\ {\it mod}\ p$. We have  $\widehat{\delta}(s_j,a^{n-j+1})=\langle s^1_{n+1\ {\it mod}\ p}, s^2_{n+1\ {\it mod}\ q}\rangle=\langle s^1_{1}, s^2_1\rangle\in F$ and $\widehat{\delta}(s_i,a^{n-j+1})=\langle s^1_{n-j+1+i\ {\it mod}\ p}, s^2_{n-j+1+i\ {\it mod}\ q}\rangle=\langle s^1_{-j+1+i\ {\it mod}\ p}, s^2_{-j+1+i\ {\it mod}\ q}\rangle$. Since $j-i\not\equiv 1\ {\it mod}\ p$, we have $-j+1+i\not\equiv 0\ {\it mod}\ p$.
              Since $i\not\equiv j \ {\it mod}\ q$, we have $(-j+1+i)\not\equiv 1\ {\it mod}\ q$. Therefore, $\widehat{\delta}(s_i,a^{n-j+1})\not\in F$. We have again that $s_i$ and $s_j$ are distinguishable.
          \end{enumerate}
  \end{enumerate}
We  have therefore shown that the DFA ${\cal A}'$ is minimal and  has $\frac{1}{2}pq$ states. Let us assume that  there is a pvDFA ${\cal B}$ with less than $\frac{1}{2}pq$ states recognizing  the promise problem $A$. We can then get a DFA with less than $\frac{1}{2}pq$ states  recognizing the language $A_{yes}\cup A_{no}$. This would implies  that the DFA ${\cal A}'$ is not minimal.
A contradiction.

 Obviously, $s(A_{yes})=p$ and $s(A_{no})=q$. Therefore, we have proved that $sr(A)= \frac{1}{2}pq=\frac{1}{2}s(A_{yes})s(A_{no})$.

\end{proof}

\end{document}